\newcommand{\eps}{\varepsilon}
\newtheorem{theorem}{Theorem}
\newtheorem{lemma}[theorem]{Lemma}
\newtheorem{corollary}[theorem]{Corollary}
\newtheorem{definition}[theorem]{Definition}
\newtheorem{example}[theorem]{Example}
\newproof{proof}{Proof}
\journal{A revision with minor corrections, originally published in Systems \& Control Letters 61 (12), 1260-1268, 2012}
\begin{document}

\begin{frontmatter}

\title{On Conditional Decomposability}

\author[inst]{Jan~Komenda}
  \ead{komenda@ipm.cz}
  \address[inst]{Institute of Mathematics, Academy of Sciences of the Czech Republic, {\v Z}i{\v z}kova 22, 616 62 Brno, Czech Republic}
\author[inst]{Tom{\' a}{\v s}~Masopust\corref{cor1}}
  \ead{masopust@math.cas.cz}
\author[cwi]{Jan H. van Schuppen}
  \ead{J.H.van.Schuppen@cwi.nl}
  \address[cwi]{CWI, P.O. Box 94079, 1090 GB Amsterdam, The Netherlands}
\cortext[cor1]{Corresponding author. Institute of Mathematics, Academy of Sciences of the Czech Republic, {\v Z}i{\v z}kova 22, 616 62 Brno, Czech Republic, Tel.~+420222090784, Fax.~+420541218657 }

\begin{abstract}
  The requirement of a language to be conditionally decomposable is imposed on a specification language in the coordination supervisory control framework of discrete-event systems. In this paper, we present a polynomial-time algorithm for the verification whether a language is conditionally decomposable with respect to given alphabets. Moreover, we also present a polynomial-time algorithm to extend the common alphabet so that the language becomes conditionally decomposable. A relationship of conditional decomposability to nonblockingness of modular discrete-event systems is also discussed in this paper in the general settings. It is shown that conditional decomposability is a weaker condition than nonblockingness.
\end{abstract}
\begin{keyword}
  Discrete-event system \sep coordination control \sep conditional decomposability.
  \MSC 93C65 \sep 93A99 \sep 93B50
\end{keyword}
\end{frontmatter}

\section{Introduction}\label{intro}
  In the Ramadge-Wonham supervisory control framework, discrete-event systems are represented by deterministic finite automata. Given a specification language (usually also represented by a deterministic finite automaton), the aim of supervisory control is to construct a supervisor so that the closed-loop system satisfies the specification~\cite{RW87}. The theory is widely developed for the case where the system (plant) is monolithic. However, large engineering systems are typically constructed compositionally as a collection of many small components (subsystems) that are interconnected by rules; for instance, using a synchronous product or a communication protocol. This is especially true for discrete-event systems, where different local components run in parallel. Moreover, examples of supervisory control of modular discrete-event systems show that a coordinator is often necessary for achieving the required properties because the purely decentralized control architecture may fail in achieving these goals.
  
  The notion of separability of a specification language has been introduced in~\cite{WH91}, and says that a language $K$ over an alphabet $\bigcup_{i=1}^{n} E_i$, $n\ge 2$, is separable if $K=\bigparallel_{i=1}^{n} P_i(K)$, where for all $i=1,2,\ldots,n$, $P_i:(\bigcup E_i)^*\to E_i^*$ is a projection. A specification for a global system is separable if it can be represented (is fully determined) by local specifications for the component subsystems. It is very closely related to the notion of decomposability introduced in~\cite{RudieWonham90,RudieWonham1992} for decentralized discrete-event systems, which is also further studied in, e.g.,~\cite{JiangKumar2000}. Decomposability is a slightly more general condition because it involves not only the specification, but also the plant language, that is, a language $K\subseteq L$ over an alphabet $\bigcup_{i=1}^{n} E_i$, $n\ge 2$, is decomposable with respect to a plant language $L$ if $K=\bigparallel_{i=1}^{n} P_i(K) \parallel L$: separability is then decomposability where $L=(\bigcup_{i=1}^{n} E_i)^*$ is the set of all strings over the global alphabet. In this paper, we slightly abuse the terminology and call a separable language in the sense of~\cite{WH91} also decomposable. It has been shown in~\cite{WH91} that decomposability is important because it is computationally cheaper to compute locally synthesized supervisors that constitute a solution of the supervisory control problem for this decomposable specification. Recently, the notion of decomposability has also been extended to automata as an automaton decomposability in, e.g.,~\cite{karimadini:decomposability}. 

  However, the assumption that a specification language is decomposable is too restrictive. Therefore, several authors have tried to find alternative techniques for general indecomposable specification languages; for instance, the approach of~\cite{GM07} based on partial controllability, which requires that all shared events are controllable, or the shared events must have the same controllability status (but then an additional condition of so-called mutual controllability~\cite{KvSGM08} is needed).  

  In this paper, we study a weaker version of decomposability, so-called {\em conditional decomposability}, which has recently been introduced in~\cite{KvS08} and studied in~\cite{scl2011,Automatica2011} in the context of coordination supervisory control of discrete-event systems. It is defined as decomposability with respect to local alphabets augmented by the coordinator alphabet. The word conditional means that although a language is not decomposable with respect to the original local alphabets, it becomes decomposable with respect to the augmented ones, i.e., decomposability is only guaranteed (conditioned) by local event set extensions by coordinator events.
  
  In the coordination control approach of modular discrete-event systems, the plant is formed as a parallel composition of two or more subsystems, while the specification language is represented over the global alphabet. Therefore, the property of conditional decomposability is required in this approach to distribute parts of the specification to the corresponding components to solve the problem locally. More specifically, we need to ensure that there exists a corresponding part of the specification for the coordinator and for each subsystem composed with the coordinator. Thus, if the specification is conditionally decomposable, we can take this decomposition as the corresponding parts for the subsystems composed with a coordinator and solve the problem locally.
    
  Conditional decomposability depends on the alphabet of the coordinator, which can always be extended so that the specification is conditionally decomposable. In the worst (but unlikely) case all events must be put into the coordinator alphabet to make a language conditionally decomposable.  But in the case when the coordinator alphabet would be too large it is better to divide the local subsystems into groups that are only loosely coupled and introduce several coordinators on smaller alphabets. In this paper, a polynomial-time algorithm is provided for the verification whether a language is conditionally decomposable. We make an important observation that the algorithm is linear in the number of local alphabets, while algorithms for checking similar properties (such as decomposability and coobservability) suffer from the exponential-time complexity with respect to the number of local alphabets. This algorithm is then modified so that it extends the coordinator alphabet to make the specification language conditionally decomposable. Furthermore, we discuss a relationship of conditional decomposability to nonblockingness of a coordinated system, where a coordinated system is understood as a modular system composed of two or more subsystems and a coordinator.
  
  Finally, since one of the central notions of this paper is the notion of a (natural) projection, the reader is referred to~\cite{gjtm2011} for more information on the state complexity of projected regular languages.
  
  The rest of this paper is organized as follows. In Section~\ref{preliminaries}, basic definitions and concepts of automata theory and discrete-event systems are recalled. In Section~\ref{cdecomposability}, a polynomial-time algorithm for testing conditional decomposability for a general monolithic system is presented. In Section~\ref{sec:extension}, this algorithm is modified to extend the coordinator alphabet so that the specification becomes conditionally decomposable. In Section~\ref{cdandnonblocking}, the relation of nonblockingness of a coordinated system with conditional decomposability is discussed. The conclusion with hints for future developments is presented in Section~\ref{conclusion}.

\section{Preliminaries and definitions}\label{preliminaries}
	In this paper, we assume that the reader is familiar with the basic concepts of supervisory control theory~\cite{CL08} and automata theory~\cite{salomaa}. For an alphabet $E$, defined as a finite nonempty set, $E^*$ denotes the free monoid generated by $E$, where the unit of $E^*$, the empty string, is denoted by $\eps$. A {\em language\/} over $E$ is a subset of $E^*$. A prefix closure $\overline{L}$ of a language $L\subseteq E^*$ is the set of all prefixes of all words of $L$, i.e., it is defined as the set $\overline{L}=\{ w\in E^* \mid \exists u\in E^* : wu\in L\}$. A language $L$ is said to be prefix-closed if $L=\overline{L}$.

  In this paper, the notion of a generator is used to denote an incomplete deterministic finite automaton. A {\em generator\/} is a quintuple $G=(Q,E,\delta,q_0,F)$, where $Q$ is a finite set of {\em states}, $E$ is an {\em input alphabet}, $\delta: Q \times E \to Q$ is a {\em partial transition function}, $q_0 \in Q$ is the {\em initial state}, and $F\subseteq Q$ is the set of {\em final or marked states}. In the usual way, $\delta$ is inductively extended to a function from $Q \times E^*$ to $Q$. The language {\em generated\/} by $G$ is defined as the set $L(G) = \{w\in E^* \mid \delta(q_0,w)\in Q\}$, and the language {\em marked\/} by $G$ is defined as the set $L_m(G) = \{w\in E^* \mid \delta(q_0,w)\in F\}$. Moreover, we use the predicate $\delta(q,a)!$ to denote that the transition $\delta(q,a)$ is defined in state $q\in Q$ for event $a\in E$.
	
	For a generator $G$, let ${\tt trim}(G)$ denote the {\em trim\/} of $G$, that is, a generator ${\tt trim}(G)$ such that $\overline{L_m({\tt trim}(G))}=L({\tt trim}(G))=\overline{L_m(G)}$. In other words, all reachable states of $G$ from which no marked state is reachable are removed (including the corresponding transitions), and only reachable states are considered in ${\tt trim}(G)$, see~\cite{CL08,Won04}. A generator $G$ is said to be {\em nonblocking\/} if $\overline{L_m(G)}=L(G)$. Thus, ${\tt trim}(G)$ is always nonblocking.
	
  A {\em (natural) projection\/} $P: E^* \to E_0^*$, where $E_0\subseteq E$ are alphabets, is a homomorphism defined so that $P(a)=\eps$, for $a\in E\setminus E_0$, and $P(a)=a$, for $a\in E_0$. The {\em inverse image\/} of the projection $P$, denoted by $P^{-1}:E_0^*\to 2^{E^*}$, is defined so that for a language $L$ over the alphabet $E_0$, the set $P^{-1}(L)=\{s\in E^* \mid P(s)\in L\}$. In what follows, we use the notation $P_j^i$ to denote the projection from $E_i$ to $E_j$, that is, $P^{i}_{j} : E_i^* \to E_j^*$. In addition, we use the notation $E_{i+j}=E_i\cup E_j$, and, thus, $P^{i+j}_{k}$ denotes the projection from $E_{i+j}$ to $E_k$. If the projection is from the union of all the alphabets, then we simply use the notation $P_i:(\bigcup_j E_j)^* \to E_i^*$.

	Let $L_1\subseteq E_1^*$ and $L_2\subseteq E_2^*$ be two languages. The {\em parallel composition of $L_1$ and $L_2$\/} is defined as the language \[L_1\parallel L_2 = P_1^{-1}(L_1) \cap P_2^{-1}(L_2)\,,\] where $P_1: (E_1\cup E_2)^*\to E_1^*$ and $P_2: (E_1\cup E_2)^*\to E_2^*$.
	A similar definition in terms of generators follows. Let $G_1=(X_1,E_1,\delta_1,x_{01},F_1)$ and $G_2=(X_2,E_2,\delta_2,x_{02},F_2)$ be two generators. The {\em parallel composition of $G_1$ and $G_2$\/} is the generator $G_1\parallel G_2$ defined as the accessible part of the generator $(X_1\times X_2, E_1\cup E_2, \delta, (x_{01}, x_{02}), F_1\times F_2)$, where
  \[
    \delta((x,y),e)=
    \left\{\begin{array}{ll}
      (\delta_1(x,e),\delta_2(y,e)), & \text{ if } \delta_1(x,e)! \text{ and } \delta_2(y,e)!;\\
      (\delta_1(x,e), y),            & \text{ if } \delta_1(x,e)! \text{ and } e\notin E_2;\\
      (x, \delta_2(y,e)),            & \text{ if } e\notin E_1    \text{ and } \delta_2(y,e)!;\\
      \text{undefined,}              & \text{ otherwise.}
    \end{array}\right.
  \]
  The automata definition is related to the language definition by the following properties: $L(G_1\parallel G_2)=L(G_1)\parallel L(G_2)$ and $L_m(G_1\parallel G_2)=L_m(G_1)\parallel L_m(G_2)$, see~\cite{CL08}. 
  
  The automata-theoretic concept of nonblockingness of a composition of two generators $G_1$ and $G_2$ is equivalent to the language-theoretic concept of nonconflictness of two languages $L_m(G_1)$ and $L_m(G_2)$ if the generators $G_1$ and $G_2$ are nonblocking. Recall that two languages $L_1$ and $L_2$ are {\em nonconflicting\/} if $\overline{L_1}\parallel \overline{L_2}=\overline{L_1\parallel L_2}$, cf.~\cite{Won04,FLT,FW2006}.
  
  Let $G$ be a generator and $P$ be a projection, then $P(G)$ denotes the minimal generator such that $L_m(P(G))=P(L_m(G))$ and $L(P(G))=P(L(G))$. For a construction of $P(G)$, the reader is referred to~\cite{CL08,Won04}. 

	Now, the main concept of interest of this paper, the concept of conditional decomposability, is defined. See also \cite{KvS08,scl2011,Automatica2011,JKTMJHvS_wodes2010} for the applications and further discussion concerning this concept.
	
	\begin{definition}[Conditional decomposability]\label{cd}
    A language $K$ over an alphabet $E_1\cup E_2\cup\ldots\cup E_n$, $n\ge 2$, is said to be {\em conditionally decomposable with respect $E_1$, $E_2$,\ldots, $E_n$, and $E_k$\/}, where $\bigcup_{i,j\in\{1,2,\ldots,n\}}^{i\neq j} (E_i\cap E_j) \subseteq E_k\subseteq \bigcup_{j=1}^{n} E_j$, if
    \[
      K = P_{1+k} (K) \parallel P_{2+k} (K) \parallel \ldots \parallel P_{n+k}(K)\,.
    \]
    Recall that $P_{i+k}$ denotes the projection from $\bigcup_{j=1}^{n} E_j$ to $E_{i+k}$.
  \end{definition}

  Note that $\parallel_{i=1}^{n} P_{i+k} (K) = (\parallel_{i=1}^{n} P_{i+k} (K)) \parallel P_k(K)$ because $P_{i+k}(K)\subseteq (P^{i+k}_k)^{-1}P_k(K)$, which follows from the fact that $P^{i+k}_k P_{i+k}(K)=P_k(K)$. Hence, $\parallel_{i=1}^{n} P_{i+k} (K)\subseteq P_k^{-1}P_k(K)$. Moreover, if the language $K$ is given as a parallel composition of $n$ languages (over the required alphabets), then it is conditionally decomposable.
  \begin{lemma}
		A language $K\subseteq (E_1\cup E_2\cup\ldots\cup E_n)^*$ is conditionally decomposable with respect to alphabets $E_1$, $E_2$,\ldots, $E_n$, $E_k$ if and only if there exist languages $M_{i+k}\subseteq E_{i+k}^*$, $i=1,2,\ldots,n$, such that $K=\parallel_{i=1}^{n} M_{i+k}$.
	\end{lemma}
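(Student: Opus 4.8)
The plan is to prove the two implications separately, treating the forward direction as immediate and concentrating the work on the converse. For the forward direction, suppose $K$ is conditionally decomposable. Then Definition~\ref{cd} gives $K = \parallel_{i=1}^{n} P_{i+k}(K)$ directly, so it suffices to set $M_{i+k} = P_{i+k}(K)$. Since each $P_{i+k}(K) \subseteq E_{i+k}^*$, these languages are of the required form and the existence claim holds.

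The substance lies in the converse: assuming $K = \parallel_{i=1}^{n} M_{i+k}$ for some $M_{i+k} \subseteq E_{i+k}^*$, I would show that $K = \parallel_{i=1}^{n} P_{i+k}(K)$. One containment comes for free from a general property of projections. Because $E_k \subseteq \bigcup_{j=1}^{n} E_j$, we have $\bigcup_{i=1}^{n} E_{i+k} = \bigcup_{j=1}^{n} E_j$, so the composition $\parallel_{i=1}^{n} P_{i+k}(K)$ is taken over the full alphabet. From $K \subseteq P_{i+k}^{-1}(P_{i+k}(K))$ for each $i$, intersecting over $i$ yields $K \subseteq \bigcap_{i=1}^{n} P_{i+k}^{-1}(P_{i+k}(K)) = \parallel_{i=1}^{n} P_{i+k}(K)$, which holds for \emph{any} language $K$.

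The reverse containment $\parallel_{i=1}^{n} P_{i+k}(K) \subseteq K$ is where the hypothesis is used. The key step is the observation that projecting a parallel composition onto one of its component alphabets does not enlarge the corresponding component, i.e. $P_{i+k}\bigl(\parallel_{j=1}^{n} M_{j+k}\bigr) \subseteq M_{i+k}$. This follows because $\parallel_{j=1}^{n} M_{j+k} \subseteq P_{i+k}^{-1}(M_{i+k})$ by definition of the parallel composition as an intersection of inverse projections, and applying $P_{i+k}$ together with the identity $P_{i+k}P_{i+k}^{-1}(M_{i+k}) = M_{i+k}$. Consequently $P_{i+k}(K) \subseteq M_{i+k}$ for every $i$. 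Since parallel composition is monotone (being an intersection of inverse projections, each of which is monotone), we obtain $\parallel_{i=1}^{n} P_{i+k}(K) \subseteq \parallel_{i=1}^{n} M_{i+k} = K$. Combining the two containments gives the desired equality, so $K$ is conditionally decomposable.

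The main obstacle, and really the only nonroutine point, is the inclusion $P_{i+k}\bigl(\parallel_{j=1}^{n} M_{j+k}\bigr) \subseteq M_{i+k}$; its correctness hinges on the parallel composition being formed over the full alphabet $\bigcup_{j=1}^{n} E_j$, which is exactly what the condition $E_k \subseteq \bigcup_{j=1}^{n} E_j$ guarantees via $\bigcup_{i=1}^{n} E_{i+k} = \bigcup_{j=1}^{n} E_j$. Once this is in place, the remaining arguments are standard manipulations of natural projections and their inverses, together with monotonicity of $\parallel$.
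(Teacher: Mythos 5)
Your proof is correct and follows essentially the same route as the paper's: the forward direction sets $M_{i+k}=P_{i+k}(K)$, the inclusion $K\subseteq\ \parallel_{i=1}^{n}P_{i+k}(K)$ comes from $K\subseteq P_{i+k}^{-1}(P_{i+k}(K))$, and the reverse inclusion from $P_{i+k}(K)\subseteq M_{i+k}$ together with monotonicity of the composition. The only difference is that you explicitly justify $P_{i+k}\bigl(\parallel_{j=1}^{n}M_{j+k}\bigr)\subseteq M_{i+k}$ via $P_{i+k}P_{i+k}^{-1}(M_{i+k})=M_{i+k}$, a step the paper dismisses as obvious.
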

  \begin{proof}
    If $K=\ \parallel_{i=1}^{n} P_{2+k}(K)$, define $M_{i+k}=P_{i+k}(K)$, for $i=1,2,\ldots,n$. On the other hand, assume that there exist languages $M_{i+k} \subseteq E_{i+k}^*$, $i=1,2,\ldots,n$, such that $K =\ \parallel_{i=1}^{n} M_{i+k}$. Obviously, $P_{i+k}(K)\subseteq M_{i+k}$, $i=1,2,\ldots,n$, which implies that $\parallel_{i=1}^{n} P_{i+k}(K)\subseteq K$. As it always holds that $K\subseteq P_{i+k}^{-1}[P_{i+k}(K)]$, the definition of the synchronous product implies that $K\subseteq\ \parallel_{i=1}^{n} P_{i+k}(K)$.
  \qed\end{proof}

	Note that $K=\ \parallel_{i=1}^{n} M_{i+k}$ implies that the languages $P_{i+k}(K)\subseteq M_{i+k}$, for $i=1,2,\ldots,n$, which means that $P_{i+k}(K)$ are the smallest languages whose parallel composition results in $K$. In other words, if $K$ is conditionally decomposable, then $P_{i+k}(K)$, $i=1,2,\ldots,n$, is the smallest decomposition of $K$ with respect to the corresponding alphabets.

\section{Polynomial Test of Conditional Decomposability}\label{cdecomposability}
  In this section, we first construct a polynomial-time algorithm for the verification of conditional decomposability for alphabets $E_1$, $E_2$, and $E_k$, that is, for the case $n=2$, and then show how this is used to verify conditional decomposability for a general $n\ge 2$. To this end, consider a language $L$ over $E_1\cup E_2$, marked by a generator $G$. To verify whether or not $L$ is conditionally decomposable with respect $E_1$, $E_2$, and $E_k$, we construct a new structure as a parallel composition of two copies of $G$, denoted $f_{i+k}(G)$, for $i=1,2$, (see Example~\ref{ex2} and Figure~\ref{fig03}) that simultaneously verifies that each word of $P_{1+k}(L)\parallel P_{2+k}(L)$ also belongs to $L=L_m(G)$; $f_{i+k}(G)$ is constructed from the generator $G$ by renaming each event $e\in E_{j-k}=E_j\setminus E_k$, $j\neq i$, by a new event $\tilde{e}\in\tilde{E}_{j-k}$. In other words, each event $e$ which is not observed by $G$ according to the observable alphabet $E_i\cup E_k$ is replaced with a new event. Thus, the copy $f_{i+k}(G)$ is over the alphabet $E_{i+k}\cup\tilde{E}_{j-k}$, as demonstrated in the following example.
  \begin{example}\label{ex2}
    Consider the language $L_m(G)$ marked by the generator $G$ depicted in Figure~\ref{fig02}\subref{GG}, where the corresponding alphabets are $E_1=\{a,b,d\}$, $E_2=\{a,c,d\}$, and $E_k=\{a,d\}$. The isomorphic generators $f_{1+k}(G)$ with renamed event $c$, and $f_{2+k}(G)$ with renamed event $b$ are depicted in Figure~\ref{fig02}\subref{GG1} and Figure~\ref{fig02}\subref{GG2}, respectively.
    \begin{figure}[ht]
      \centering
      \subfloat[Generator $G$.]{\label{GG}\includegraphics[scale=.9]{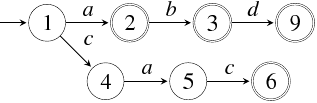}}
      \qquad
      \subfloat[Generator $f_{1+k}(G)$.]{\label{GG1}\includegraphics[scale=.9]{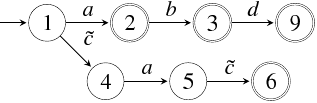}}
      \qquad
      \subfloat[Generator $f_{2+k}(G)$.]{\label{GG2}\includegraphics[scale=.9]{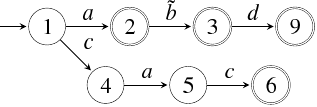}}
      \caption{Generators $G$, $f_{1+k}(G)$ and $f_{2+k}(G)$.}
      \label{fig02}
    \end{figure}
  $\hfill\diamond$\end{example}
  
  More specifically, let $E_1$, $E_2$, $E_k$ be alphabets such that $E_1\cap E_2 \subseteq E_k\subseteq E_1\cup E_2$, and define the global alphabet $E=E_1\cup E_2$. The structure is constructed as follows:
  \begin{enumerate}
    \item For the alphabet $E_i\setminus E_k$, where $i=1,2$, introduce a new alphabet $\tilde{E}_{i-k}=\{\tilde{a} \mid a\in E_i\setminus E_k\}$ that for each event $a\in E_i\setminus E_k$ contains a new event $\tilde{a}$. That is, $\tilde{E}_{i-k}\cap (E_i\setminus E_k)=\emptyset$ and there exists a bijection $g_{i-k}$ from $(E_i\setminus E_k)$ to $\tilde{E}_{i-k}$ such that $g_{i-k}(a)=\tilde{a}$. Note that $\tilde{E}_{1-k}\cap\tilde{E}_{2-k}=\emptyset$ because $E_1\cap E_2\subseteq E_k$. 
    
    \item Recall that $E_{i+k}=E_i\cup E_k$, for $i=1,2$, and let $\tilde{P}:(E\cup\tilde{E}_{1-k}\cup\tilde{E}_{2-k})^*\to E^*$ be a projection.
    
    \item Define two isomorphisms $f_{i+k}: E^* \to (E_{i+k}\cup \tilde{E}_{j-k})^*$, where $i,j\in\{1,2\}$, $i\neq j$, so that
    \[
      f_{i+k}(a)=
      \left\{\begin{array}{ll}
        a\,,         & \text{ for } a\in E_{i+k};\\
        \tilde{a}\,, & \text{ for } a\in E_j\setminus E_k\,.
      \end{array}\right.
    \]
    Note that it immediately follows that $\tilde{P}(f_{i+k}(L_m(G)))=P_{i+k}(L_m(G)))$ because both projections remove all events that are not in $E_{i+k}$.

    \item For a generator $G=(Q,E,\delta,q_0,F)$, we abuse the notation and denote by $f_{i+k}(G)=(Q,E_{i+k}\cup \tilde{E}_{j-k},\tilde{\delta},q_0,F)$, where $j\neq i$, the generator isomorphic with $G$ where events are renamed according to the isomorphism $f_{i+k}$, and the transition function $\tilde{\delta}$ is define as $\tilde{\delta}(q,f_{i+k}(a))=\delta(q,a)$.

    \item Let $L\subseteq E^*$ be a language generated by a minimal generator $G$, and define the generator \[\tilde{G} = f_{1+k}(G)\parallel f_{2+k}(G)\] over the alphabet $E\cup\tilde{E}_{1-k}\cup\tilde{E}_{2-k}$. By the definition of $\tilde{G}$, the assumption that $E_1\cap E_2\subseteq E_k$ which ensures that $\tilde{P}$ distributes over the synchronous product (see Lemma~\ref{lemma:Wonham} below), and Step 3 above, respectively, we have that
    \begin{equation}\label{eq1}
      \begin{split}
          \tilde{P}(L_m(\tilde{G}))
          &= \tilde{P}(f_{1+k}(L_m(G)) \parallel f_{2+k}(L_m(G)))\\
          &= \tilde{P}(f_{1+k}(L_m(G))) \parallel \tilde{P}(f_{2+k}(L_m(G))) \\
          &= P_{1+k}(L_m(G)) \parallel P_{2+k}(L_m(G))\,.
      \end{split}
    \end{equation}
  \end{enumerate}

  \begin{lemma}[\cite{Won04}]\label{lemma:Wonham}
    Let $E_1\cap E_2\subseteq E_k\subseteq E_1\cup E_2$, and let $L_1\subseteq E_1^*$ and $L_2\subseteq E_2^*$ be languages. Let $P_k : E^*\to E_k^*$ be a projection, then $P_k(L_1\parallel L_2)=P_{k}(L_1) \parallel P_{k}(L_2)$.
  \end{lemma}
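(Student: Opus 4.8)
The plan is to prove the two inclusions separately, noting that only $P_k(L_1)\parallel P_k(L_2)\subseteq P_k(L_1\parallel L_2)$ actually uses the hypothesis $E_1\cap E_2\subseteq E_k$. Throughout I write $E=E_1\cup E_2$; since $L_1\subseteq E_1^*$ and $L_2\subseteq E_2^*$, the restriction of $P_k$ sends them into $(E_1\cap E_k)^*$ and $(E_2\cap E_k)^*$, and I let $Q_1:E_k^*\to(E_1\cap E_k)^*$ and $Q_2:E_k^*\to(E_2\cap E_k)^*$ be the projections that, by definition, realize $P_k(L_1)\parallel P_k(L_2)=Q_1^{-1}(P_k(L_1))\cap Q_2^{-1}(P_k(L_2))$ over $E_k=(E_1\cap E_k)\cup(E_2\cap E_k)$.

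For the inclusion $\subseteq$, which holds for every $E_k$, I would take $w=P_k(s)$ with $s\in L_1\parallel L_2$, so that $P_1(s)\in L_1$ and $P_2(s)\in L_2$. The point is that a composition of two natural projections is again the natural projection onto the intersection of the two target alphabets; hence $Q_1(P_k(s))$ and $P_k(P_1(s))$ both equal the projection of $s$ onto $E_1\cap E_k$, so $Q_1(w)=P_k(P_1(s))\in P_k(L_1)$, and symmetrically $Q_2(w)\in P_k(L_2)$. Therefore $w\in P_k(L_1)\parallel P_k(L_2)$.

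For the reverse inclusion I would start from $w\in P_k(L_1)\parallel P_k(L_2)$ and pick witnesses $s_1\in L_1$, $s_2\in L_2$ with $P_k(s_1)=Q_1(w)$ and $P_k(s_2)=Q_2(w)$. The first key step is to show that $s_1$ and $s_2$ agree on the shared alphabet, i.e. their projections onto $E_1\cap E_2$ coincide: because $E_1\cap E_2\subseteq E_k$, projecting each $s_i$ onto $E_1\cap E_2$ factors through $P_k$, and both resulting words equal the projection of $w$ onto $E_1\cap E_2$. This is exactly where the hypothesis is used, and it guarantees $s_1\parallel s_2\neq\emptyset$, so that every interleaving $s$ of $s_1$ and $s_2$ lies in $L_1\parallel L_2$.

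The remaining and main obstacle is to produce one such interleaving whose $P_k$-image is exactly $w$, rather than merely some word of $Q_1(w)\parallel Q_2(w)$. I would do this by induction on $|s_1|+|s_2|$, peeling letters from the front: a private invisible letter of $s_1$ (in $E_1\setminus E_k$, hence, by the hypothesis, outside $E_2$) or of $s_2$ is emitted immediately without touching $w$; otherwise the first letter $\sigma$ of $w$ lies in $E_k$ and, according to whether $\sigma\in E_1\setminus E_2$, $\sigma\in E_2\setminus E_1$, or $\sigma\in E_1\cap E_2$, is consumed from the head of $s_1$, of $s_2$, or synchronously from both, after checking that the inductive hypotheses ($P_k(s_1')=Q_1(w')$, $P_k(s_2')=Q_2(w')$, and the agreement on $E_1\cap E_2$) are preserved. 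A routine verification then shows the constructed $s$ satisfies $P_1(s)=s_1$, $P_2(s)=s_2$, and $P_k(s)=w$, so $w\in P_k(L_1\parallel L_2)$. The case analysis in this induction, and confirming that the matching of $\sigma$ against the heads of $s_1$ and $s_2$ is forced, is the technical heart of the argument; everything else is bookkeeping on projections.
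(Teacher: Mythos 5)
Your proof is correct, but there is nothing in the paper to diverge from: the paper does not prove this lemma at all, it imports it by citation from \cite{Won04}. What you have written is, in full detail, the standard direct proof of this distributivity property, and the structure is sound. You correctly isolate the two load-bearing points: the inclusion $P_k(L_1\parallel L_2)\subseteq P_k(L_1)\parallel P_k(L_2)$ holds for any $E_k$ and needs only the fact that nested natural projections compose to the projection onto the intersection of the target alphabets; and for the converse, agreement of the witnesses $s_1,s_2$ on $E_1\cap E_2$ (which is where $E_1\cap E_2\subseteq E_k$ enters) is necessary but not sufficient, since an arbitrary element of $\{s_1\}\parallel\{s_2\}$ need not project onto $w$ --- e.g.\ with $E_1=\{a\}$, $E_2=\{b\}$, $E_k=\{a,b\}$, $s_1=a$, $s_2=b$, $w=ab$, only one of the two interleavings works. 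Your head-peeling induction resolves exactly this, and the forcing claim checks out: in the ``otherwise'' branch the head of $s_i$ lies in $E_i\cap E_k$, hence is the first letter of $P_k(s_i)=Q_i(w)$, so it must equal $\sigma$ whenever $\sigma$ survives $Q_i$; and the hypothesis $E_1\cap E_2\subseteq E_k$ is what makes a private letter in $E_1\setminus E_k$ invisible to $E_2$, so emitting it preserves all three invariants.

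Two points are worth making explicit in a final write-up. First, the second hypothesis $E_k\subseteq E_1\cup E_2$ is doing real work in your induction: it makes the three-way case split on $\sigma\in E_k$ exhaustive, and it handles the base case, since $Q_1(w)=Q_2(w)=\eps$ forces $w=\eps$ only because every letter of $E_k$ lies in $E_1$ or $E_2$. Second, you should fix the convention that the right-hand side $P_k(L_1)\parallel P_k(L_2)$ is taken with respect to the component alphabets $E_1\cap E_k$ and $E_2\cap E_k$ (as your $Q_1,Q_2$ encode); read instead as plain intersection over $E_k$, the statement would be false, and your setup paragraph is the right place to pin this down.
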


  From the equations of $(\ref{eq1})$, we immediately have the following result for conditional decomposability.
  \begin{theorem}\label{lem1}
    The language $L_m(G)$ is conditionally decomposable with respect to alphabets $E_1$, $E_2$, $E_k$ if and only if it holds that $\tilde{P}(L_m(\tilde{G})) = L_m(G)$.
  \end{theorem}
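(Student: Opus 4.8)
The plan is to observe that the theorem is an immediate consequence of the chain of identities already established in equation~(\ref{eq1}), combined with the definition of conditional decomposability specialized to the case $n=2$. First I would unfold Definition~\ref{cd} for $n=2$: the language $L_m(G)$ is conditionally decomposable with respect to $E_1$, $E_2$, $E_k$ exactly when $L_m(G) = P_{1+k}(L_m(G)) \parallel P_{2+k}(L_m(G))$. This reduces the claim to be proved to the question of whether the right-hand side of this decomposition coincides with $\tilde{P}(L_m(\tilde{G}))$.

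The key step is then to invoke equation~(\ref{eq1}), which asserts precisely that $\tilde{P}(L_m(\tilde{G})) = P_{1+k}(L_m(G)) \parallel P_{2+k}(L_m(G))$. Substituting this identity into the decomposability criterion yields the desired equivalence: $L_m(G)$ is conditionally decomposable if and only if $L_m(G) = P_{1+k}(L_m(G)) \parallel P_{2+k}(L_m(G))$, and by~(\ref{eq1}) the latter product is nothing but $\tilde{P}(L_m(\tilde{G}))$. Hence the two conditions $L_m(G) = P_{1+k}(L_m(G)) \parallel P_{2+k}(L_m(G))$ and $\tilde{P}(L_m(\tilde{G})) = L_m(G)$ are literally the same equation after rewriting one side, so they are logically equivalent. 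The argument is symmetric in the two directions and needs no separate treatment of ``only if'' and ``if''.

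I do not expect a genuine obstacle in the proof of the theorem itself, since equation~(\ref{eq1}) already carries all of the technical content. The work that would otherwise be delicate has been discharged in deriving~(\ref{eq1}): namely that $\tilde{P}$ distributes over the parallel composition, which rests on Lemma~\ref{lemma:Wonham} and, crucially, on the disjointness $\tilde{E}_{1-k}\cap\tilde{E}_{2-k}=\emptyset$ guaranteed by the standing hypothesis $E_1\cap E_2\subseteq E_k$, together with the Step~3 observation that $\tilde{P}(f_{i+k}(L_m(G)))=P_{i+k}(L_m(G))$. If I wished to make the statement self-contained rather than quoting~(\ref{eq1}), I would re-establish these three facts in sequence and then perform the substitution; but given the excerpt, the only remaining task is the direct substitution described above, so the proof is essentially a one-line identification.
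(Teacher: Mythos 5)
Your proposal is correct and matches the paper's own proof, which likewise dismisses the theorem as an immediate consequence of Definition~\ref{cd} (specialized to $n=2$) and the chain of identities in~(\ref{eq1}); you merely spell out the one-line substitution in more detail. No discrepancy or gap to report.
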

  \begin{proof}
    The proof follows immediately from the definition of conditional decomposability and~(\ref{eq1}).
  \qed\end{proof}

  However, the inclusion $L_m(G)\subseteq P_{1+k}(L_m(G))) \parallel P_{2+k}(L_m(G))) = \tilde{P}(L_m(\tilde{G}))$ always holds. Thus, only the opposite inclusion is of interest. This inclusion, $\tilde{P}(L_m(\tilde{G}))\subseteq L_m(G)$, holds if and only if $L_m(\tilde{G})\subseteq \tilde{P}^{-1}(L_m(G))$, which gives the following key theorem for testing conditional decomposability.

  \begin{theorem}\label{thm3}
    The language $L_m(G)$ is conditionally decomposable with respect to alphabets $E_1$, $E_2$, $E_k$ if and only if the inclusion $L_m(\tilde{G})\subseteq \tilde{P}^{-1}(L_m(G))$ holds.
  \end{theorem}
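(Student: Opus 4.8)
The plan is to reduce the stated equivalence to two ingredients that are already available: the characterization of conditional decomposability from Theorem~\ref{lem1}, and the standard adjunction between a projection and its inverse image. First I would recall that, by Theorem~\ref{lem1} together with~(\ref{eq1}), the language $L_m(G)$ is conditionally decomposable with respect to $E_1$, $E_2$, $E_k$ exactly when $\tilde{P}(L_m(\tilde{G})) = L_m(G)$. The inclusion $L_m(G) \subseteq P_{1+k}(L_m(G)) \parallel P_{2+k}(L_m(G)) = \tilde{P}(L_m(\tilde{G}))$ holds unconditionally, being the general synchronous-product property $K \subseteq P_{i+k}^{-1}[P_{i+k}(K)]$ (used already in the proof of the preceding lemma) combined with~(\ref{eq1}). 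Hence the sought equality collapses to the single nontrivial inclusion $\tilde{P}(L_m(\tilde{G})) \subseteq L_m(G)$, and it only remains to show that this is equivalent to $L_m(\tilde{G}) \subseteq \tilde{P}^{-1}(L_m(G))$.

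That last equivalence is precisely the Galois connection between the direct image $\tilde{P}$ and the inverse image $\tilde{P}^{-1}$: for any languages $A \subseteq (E\cup\tilde{E}_{1-k}\cup\tilde{E}_{2-k})^*$ and $B \subseteq E^*$, one has $\tilde{P}(A) \subseteq B$ if and only if $A \subseteq \tilde{P}^{-1}(B)$. I would prove this directly from the definition $\tilde{P}^{-1}(B) = \{s \mid \tilde{P}(s) \in B\}$. For the forward direction, any $s \in A$ satisfies $\tilde{P}(s) \in \tilde{P}(A) \subseteq B$, hence $s \in \tilde{P}^{-1}(B)$; for the converse, any element of $\tilde{P}(A)$ is of the form $\tilde{P}(s)$ with $s \in A \subseteq \tilde{P}^{-1}(B)$, so $\tilde{P}(s) \in B$. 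Instantiating $A = L_m(\tilde{G})$ and $B = L_m(G)$ then yields the claimed equivalence.

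Honestly, there is no genuine obstacle here: the whole statement is a chain of equivalences whose only mathematical content is the adjunction above, and the reduction to one inclusion was essentially recorded in the discussion preceding the theorem. The single point deserving a moment of care is to confirm that the inclusion $L_m(G) \subseteq \tilde{P}(L_m(\tilde{G}))$ is truly unconditional, so that the entire weight of conditional decomposability rests on $\tilde{P}(L_m(\tilde{G})) \subseteq L_m(G)$; this is secured by~(\ref{eq1}) and the elementary inclusion $K \subseteq P_{i+k}^{-1}[P_{i+k}(K)]$, and so presents no difficulty.
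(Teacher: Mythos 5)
Your proposal is correct and follows essentially the same route as the paper: reduce to the single nontrivial inclusion $\tilde{P}(L_m(\tilde{G}))\subseteq L_m(G)$ (the converse inclusion being unconditional) and then convert it to $L_m(\tilde{G})\subseteq\tilde{P}^{-1}(L_m(G))$ via the adjunction between $\tilde{P}$ and $\tilde{P}^{-1}$. The only cosmetic difference is that you verify the adjunction elementwise, whereas the paper composes the standard facts $A\subseteq\tilde{P}^{-1}\tilde{P}(A)$ and $\tilde{P}\tilde{P}^{-1}(B)=B$; these are interchangeable.
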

  \begin{proof}
    It remains to prove that $\tilde{P}(L_m(\tilde{G}))\subseteq L_m(G)$ if and only if $L_m(\tilde{G})\subseteq \tilde{P}^{-1}(L_m(G))$. However, if $\tilde{P}(L_m(\tilde{G}))\subseteq L_m(G)$, then $L_m(\tilde{G})\subseteq \tilde{P}^{-1}\tilde{P}(L_m(\tilde{G}))\subseteq \tilde{P}^{-1}(L_m(G))$. On the other hand, assume that $L_m(\tilde{G})\subseteq \tilde{P}^{-1}(L_m(G))$. Then, $\tilde{P}(L_m(\tilde{G}))\subseteq \tilde{P}\tilde{P}^{-1}(L_m(G))=L_m(G)$.
  \qed\end{proof}

  The verification of this inclusion results in Algorithm~\ref{IsCD} for checking conditional decomposability of two components in polynomial time. Let a language $L$ be represented by the minimal generator $G=(Q,E,\delta,q_0,F)$ with the complete (total) transition function $\delta$ such that $L_m(G)=L$. If the transition function is not complete, the generator can be completed in time $O(|E|\cdot |Q|)$ by adding no more than one non-marked state and the missing transitions. Assume that the alphabets $E_1$, $E_2$, and $E_k$ are such that $E_1\cap E_2 \subseteq E_k\subseteq E_1\cup E_2=E$, and see Algorithm~\ref{IsCD}.
  \begin{algorithm}[ht]
  \caption{Conditional decomposability checking.}\label{IsCD}
    \begin{algorithmic}[1]
    \Function{IsCD}{$G,E_1,E_2,E_k$}
      \State Compute $\tilde{G}$                                                \Comment{$O(|E|\cdot |Q|^2)$.}
      \State Compute $\tilde{P}^{-1}(L_m(G))$                                   \Comment{$O(|E|\cdot |Q|)$.}
      \State Compute complement $co(\tilde{P}^{-1}(L_m(G)))$                    \Comment{$O(|Q|)$.}
      \If {$co(\tilde{P}^{-1}(L_m(G)))\cap L_m(\tilde{G}) = \emptyset$}         \Comment{$O(|E|\cdot |Q|^3)$.}
        \State \Return $L_m(G)$ is CD.
      \Else
        \State \Return $L_m(G)$ is not CD.
      \EndIf
    \EndFunction
    \end{algorithmic}
  \end{algorithm}
  To determine the time complexity of the algorithm, note that the computation is dominated by step~5, and thus the overall time complexity can be stated as $O(|E|\cdot |Q|^3)$. This also means that the space complexity is polynomial with respect to the number of states of the input generator $G$ because we do not need to use more space than $O(|E|\cdot |Q|^3)$.  The complexity of individual steps of the algorithm are computed as follows. Step~2 is a parallel composition of two copies of $G$, which requires to create up to $|Q|^2$ states of the generator $\tilde{G}$, and for each of these states up to $|E|$ transitions. Step~3 requires up to $|E|\cdot |Q|$ steps because in each state, we have to add self-loops labeled by the new symbols from $\tilde{E}_{1-k}\cup\tilde{E}_{2-k}$. The complement in Step~4 is computed by interchanging the marking of states, cf.~\cite{si97}. That is, marked states are unmarked and vice versa. As $G$ is complete, this results in a generator for the complement. Note that Steps~3 and~4 can be done at the same time. Finally, to decide the emptiness in Step~5 requires up to $|Q|^2 \cdot |Q|$ using a standard product automaton, see~\cite{si97}, where for each state, up to $|E|$ transitions are constructed, and is verified by the reachability of a final state by the depth-first-search procedure in linear time \cite{Cormen:2009:IAT:1614191}. Note also that it is a longstanding open problem whether the emptiness of intersection of two regular languages generated by generators with $m_1$ and $m_2$ states, respectively, can be decided in time $o(m_1\cdot m_2)$, cf.~\cite{lipton}. If this is possible, then the complexity of our algorithm can be improved accordingly.

  We demonstrate our approach in the following example.
  \begin{example}\label{ex2_2}
    Consider the language $L_m(G)$ marked by the generator $G$ depicted in Figure~\ref{fig02}\subref{GG}, where the corresponding alphabets are $E_1=\{a,b,d\}$, $E_2=\{a,c,d\}$, and $E_k=\{a,d\}$. The isomorphic generators $f_{1+k}(G)$ with renamed event $c$, and $f_{2+k}(G)$ with renamed event $b$ are depicted in Figure~\ref{fig02}\subref{GG1} and Figure~\ref{fig02}\subref{GG2}, respectively. Their parallel composition $\tilde{G}$ is shown in Figure~\ref{fig03}. It is obvious that the string ``$cacb$'' belongs to the language $L_m(\tilde{G})$, whereas it does not belong to the language $\tilde{P}^{-1}(L_m(G))$. Thus, by Theorem~\ref{thm3}, the language $L_m(G)$ is not conditionally decomposable with respect to alphabets $E_1$, $E_2$, $E_k$.
    \begin{figure}[ht]
      \centering
      \includegraphics[scale=.9]{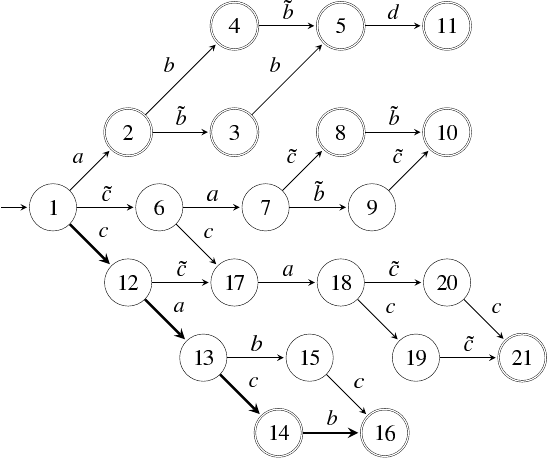}
      \caption{Generator $\tilde{G}=f_{1+k}(G)\parallel f_{2+k}(G)$ with a highlighted word violating conditional decomposability of the language $L_m(G)$.}
      \label{fig03}
    \end{figure}
  $\hfill\diamond$\end{example}

  Now, we generalize this approach to verifying conditional decomposability for a general number of $n\ge 2$ alphabets. The following theorem proves that we can directly use Algorithm~\ref{IsCD}.
  \begin{theorem}\label{thmgen}
    Let $K$ be a language, and let $E_i$, for $i=1,2,\ldots,n$, $n\ge 2$, and $E_k$ be alphabets such that $\bigcup_{i\neq j} (E_i\cap E_j) \subseteq E_k\subseteq \bigcup_{j=1}^{n} E_j$. Then, $P_{i+k}(K) \parallel P_{1+2+\ldots+(i-1)+(i+1)+\ldots+n+k}(K) \subseteq K$, for all $i=1,2,\ldots,n$, if and only if $K$ is conditionally decomposable with respect to alphabets $E_i$, $i=1,2,\ldots,n$, and $E_k$.
  \end{theorem}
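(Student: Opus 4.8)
Throughout, write $Q_i$ for the projection $P_{1+\ldots+(i-1)+(i+1)+\ldots+n+k}$ onto $\bigcup_{j\neq i}E_j\cup E_k$, so that the hypothesis of the theorem reads $P_{i+k}(K)\parallel Q_i(K)\subseteq K$ for every $i$. Since $E_i\cap\bigl(\bigcup_{j\neq i}E_j\bigr)=\bigcup_{j\neq i}(E_i\cap E_j)\subseteq E_k$, each such inequality is exactly the binary conditional decomposability condition for the two alphabets $E_i$ and $\bigcup_{j\neq i}E_j$ with coordinator $E_k$; note also that $K\subseteq\parallel_{j=1}^{n}P_{j+k}(K)$ always holds, so only the reverse inclusion $\parallel_{j=1}^{n}P_{j+k}(K)\subseteq K$ is ever at stake. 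The plan is to prove the two implications separately, the forward one by a direct projection argument and the backward one by induction on $n$.

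For ``$K$ conditionally decomposable $\Rightarrow$ the inequalities hold'' I would take any $w\in P_{i+k}(K)\parallel Q_i(K)$ and verify $P_{j+k}(w)\in P_{j+k}(K)$ for every $j$. For $j=i$ this is immediate; for $j\neq i$ it follows by composing projections, since $E_{j+k}\subseteq\bigcup_{l\neq i}E_l\cup E_k$ means the projection onto $E_{j+k}$ factors through $Q_i$, whence $P_{j+k}(w)=P_{j+k}(Q_i(w))\in P_{j+k}(Q_i(K))=P_{j+k}(K)$. Then $w\in\parallel_{j=1}^{n}P_{j+k}(K)=K$. This direction needs no induction.

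For the converse I would argue by induction on $n$, the base case $n=2$ being precisely the binary definition. For the inductive step I peel off the last component: set $K'=Q_n(K)$, a language over $\bigcup_{j<n}E_j\cup E_k$, and view $E_1,\ldots,E_{n-1},E_k$ as an $(n-1)$-alphabet instance. The crucial reduction claim is that $K'$ again satisfies the hypotheses, i.e.\ for each $i<n$ one has $P_{i+k}(K)\parallel P_J(K)\subseteq K'$, where $P_J$ denotes the projection onto $\bigcup_{j<n,\,j\neq i}E_j\cup E_k$ (here I use $P_{i+k}(K')=P_{i+k}(K)$ and $P_J(K')=P_J(K)$, both by composition of projections). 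Granting this claim, the induction hypothesis gives $\parallel_{j<n}P_{j+k}(K)\subseteq Q_n(K)$, and then monotonicity and associativity of $\parallel$ yield $\parallel_{j=1}^{n}P_{j+k}(K)=P_{n+k}(K)\parallel\bigl(\parallel_{j<n}P_{j+k}(K)\bigr)\subseteq P_{n+k}(K)\parallel Q_n(K)\subseteq K$, the last step being the hypothesis at $i=n$; this is the desired inclusion.

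The main obstacle is the reduction claim, which is where I expect the real work. Given $w$ over $\bigcup_{j<n}E_j\cup E_k$ with $P_{i+k}(w)\in P_{i+k}(K)$ and $P_J(w)\in P_J(K)$, I would choose $c\in K$ with $P_J(c)=P_J(w)$ and set $c'=Q_i(c)\in Q_i(K)$. The alphabets of $w$ and of $c'$ have union $\bigcup_j E_j$ and intersection exactly $\bigcup_{j<n,\,j\neq i}E_j\cup E_k$ (using that all pairwise intersections lie in $E_k$), and on that intersection $w$ and $c'$ agree, since $P_J(c')=P_J(c)=P_J(w)$. Hence the parallel composition $\{w\}\parallel\{c'\}$ over $\bigcup_j E_j$ is nonempty, and any $\omega$ in it satisfies $Q_n(\omega)=w$ and $Q_i(\omega)=c'$. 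One then checks $P_{i+k}(\omega)=P_{i+k}(w)\in P_{i+k}(K)$ and $Q_i(\omega)=c'\in Q_i(K)$, so the hypothesis at $i$ gives $\omega\in P_{i+k}(K)\parallel Q_i(K)\subseteq K$, whence $w=Q_n(\omega)\in Q_n(K)=K'$. The delicate points are the elementary but essential fact that two strings agreeing on their shared alphabet admit a common merge over the union alphabet, the bookkeeping that every stated pairwise intersection equals $E_k$ so that the projections compose as claimed, and a small technicality when invoking the induction hypothesis: if $E_k\not\subseteq\bigcup_{j<n}E_j$, the leftover coordinator events (necessarily lying in $E_n$) can be adjoined to one local alphabet, say $E_1$, which changes neither $E_{1+k}$ nor any pairwise intersection and restores the standing inclusion.
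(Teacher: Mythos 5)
Your proposal is correct, and its nontrivial direction takes a genuinely different route from the paper's. The easy direction (conditional decomposability implies the $n$ binary inclusions) is essentially the paper's: both arguments rest on the fact that $P_{j+k}$ factors through the ``big'' projection $Q_i$ for $j\neq i$, so that $Q_i(K)\subseteq\ \bigparallel_{j\neq i}P_{j+k}(K)$. For the converse, the paper fixes the $n$-ary instance and argues by contraposition: it picks witnesses $t_i=P_{i+k}(w_i)$ with $t_1\parallel\cdots\parallel t_n\not\subseteq K$ and proves by induction on $i$, using the binary inclusions together with Lemma~\ref{lemma:Wonham} (projections distribute over $\parallel$ when the shared events are retained), that $\{t_i\}\parallel\cdots\parallel\{t_1\}\parallel P_{(i+1)+\cdots+n+k}(w_n)\subseteq K$, reaching a contradiction at $i=n-1$. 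You instead induct on the number of components, showing that $K'=Q_n(K)$ inherits the binary inclusions for an $(n-1)$-ary instance via a string-merge argument: your computation that the shared alphabet of $w$ and $c'$ is exactly $\bigcup_{j<n,\,j\neq i}E_j\cup E_k$ correctly uses the standing assumption that all pairwise intersections lie in $E_k$, and the merge fact (two strings agreeing on the intersection of their alphabets admit a common preimage over the union, i.e.\ $\{w\}\parallel\{c'\}\neq\emptyset$) is elementary and standard; the identifications $Q_n=P_A$ and $Q_i=P_B$ then make the final step $w=Q_n(\omega)\in K'$ sound. The technicality you flag is real and your fix works: $E_k\subseteq\bigcup_{j<n}E_j$ can indeed fail after peeling off $E_n$, and adjoining $E_k\setminus\bigcup_{j<n}E_j$ to $E_1$ changes neither $E_{1+k}$ nor any pairwise intersection, so the reduced instance is legitimate --- the paper never faces this issue because it never reduces the arity. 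Comparing the two: the paper's argument stays within one instance and needs only the distributivity lemma, giving a shorter proof; yours isolates a reusable structural reduction (it shows, in passing, that each projected language $Q_n(K)$ is itself conditionally decomposable with respect to the reduced family) at the cost of the merge lemma --- which is in any case the combinatorial core underlying Lemma~\ref{lemma:Wonham} --- and some alphabet bookkeeping.
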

  \begin{proof}
    First, $P_{1+2+\ldots+(i-1)+(i+1)+\ldots+n+k}(K) \subseteq P_{1+k}(K) \parallel P_{2+k}(K)\parallel\ldots\parallel P_{(i-1)+k}(K)\parallel P_{(i+1)+k}(K)\parallel \ldots\parallel P_{n+k}(K)$ because for all $j\in\{1,2,\ldots,n\}\setminus\{i\}$, we have $P_{j+k}(P_{1+2+\ldots+(i-1)+(i+1)+\ldots+n+k}(K)) = P_{j+k}(K)$. Thus, if $K$ is conditionally decomposable, then $P_{i+k}(K)\parallel P_{1+2+\ldots+(i-1)+(i+1)+\ldots+n+k}(K) \subseteq P_{i+k}(K)\parallel P_{1+k}(K) \parallel\ldots\parallel P_{(i-1)+k}(K)\parallel P_{(i+1)+k}(K)\parallel \ldots\parallel P_{n+k}(K) = K$, for all $i=1,2,\ldots,n$.
    
    To prove the opposite implication, assume that $K$ is not conditionally decomposable. Then there exist $t_i=P_{i+k}(w_i)$, for some $w_i\in K$ and for all $i=1,2,\ldots,n$, such that $t_1\parallel t_2\parallel\ldots\parallel t_n \not\subseteq K$. We prove by induction on $i=1,2,\ldots,n-1$ that
    \begin{align}\label{eqA}
      \{t_i\} \parallel \{t_{i-1}\}\parallel \ldots \parallel \{t_2\} \parallel \{t_1\}\parallel P_{(i+1)+(i+2)+\ldots+n+k}(w_n) \subseteq K\,.
    \end{align}
    For $i=1$ and by the assumption, $\{t_1\} \parallel P_{2+3+\ldots+n+k}(w_n) \subseteq P_{1+k}(K) \parallel P_{2+3+\ldots+n+k}(K) \subseteq K$. Thus, we assume that it holds for all $i=1,2,\ldots,\ell$, $\ell<n-1$, and we prove it for $i=\ell+1$. By the induction hypothesis, $\{t_\ell\} \parallel \{t_{\ell-1}\}\parallel \ldots \parallel \{t_2\} \parallel \{t_1\}\parallel P_{(\ell+1)+(\ell+2)+\ldots+n+k}(w_n) \subseteq K$. Then, using the projection $P_{1+2+\ldots+\ell+(\ell+2)+\ldots+n+k}$, we get that
    \begin{align*}
      P_{1+2+\ldots+\ell+(\ell+2)+\ldots+n+k}\Bigl(\{t_\ell\} \parallel \{t_{\ell-1}\}\parallel \ldots \parallel \{t_2\} \parallel \{t_1\}\parallel P_{(\ell+1)+(\ell+2)+\ldots+n+k}(w_n)\Bigr) \subseteq P_{1+2+\ldots+\ell+(\ell+2)+\ldots+n+k}(K)
    \end{align*}
    and, by Lemma~\ref{lemma:Wonham}, we get that $P_{1+2+\ldots+\ell+(\ell+2)+\ldots+n+k}\Bigl(\{t_\ell\} \parallel \{t_{\ell-1}\}\parallel \ldots \parallel \{t_2\} \parallel \{t_1\}\parallel P_{(\ell+1)+(\ell+2)+\ldots+n+k}(w_n)\Bigr) = \{t_\ell\} \parallel \{t_{\ell-1}\}\parallel \ldots \parallel \{t_2\} \parallel \{t_1\}\parallel P_{(\ell+2)+\ldots+n+k}(w_n)$. By this equality and the assumption for $i=\ell+1$, we have
    \begin{align*}
      & \{t_{\ell+1}\} \parallel \Bigl[\{t_\ell\} \parallel \{t_{\ell-1}\}\parallel \ldots \parallel \{t_2\} \parallel \{t_1\}\parallel P_{(\ell+2)+\ldots+n+k}(w_n)\Bigr]\\
      & \subseteq P_{(\ell+1)+k}(K) \parallel P_{1+2+\ldots+\ell+(\ell+2)+\ldots+n+k}(K)\\
      & \subseteq K
    \end{align*}
    as claimed. Then, substituting $i=n-1$ to $(\ref{eqA})$, we immediately have that $\{t_{n-1}\} \parallel \{t_{n-2}\}\parallel \ldots \parallel \{t_2\} \parallel \{t_1\}\parallel P_{n+k}(w_n) \subseteq K$, which together with $P_{n+k}(w_n)=t_n$ implies that $\{t_{n-1}\} \parallel \{t_{n-2}\}\parallel \ldots \parallel \{t_2\} \parallel \{t_1\}\parallel \{t_n\} \subseteq K$, which is a contradiction. Thus, $K$ is conditionally decomposable.
  \qed\end{proof}

  The previous theorem says that we can check conditional decomposability of a language $K$ by $n$ executions of Algorithm~\ref{IsCD}. This means that the overall complexity of verifying conditional decomposability for a general number of alphabets, $n\ge 2$, is $O(n\cdot |E|\cdot |Q|^3)$, which is polynomial with respect to the number of states and the number of components.
  
  To conclude this section, note that an example of an $r$-state automaton with $|E|=4$ and a projection reaching the exponential upper bound on the number of states, more precisely the upper bound $3\cdot 2^{r-2} - 1$, has been shown in \cite{wong98}. Thus, the approach following the definition of conditional decomposability computing projections and parallel composition is exponential for that language even for the case of two alphabets. In comparison, the complexity of our algorithm is polynomial. A preliminary version of this algorithm has been implemented in libFAUDES \cite{faudes}.

\section{Extension of the coordinator alphabet}\label{sec:extension}
  According to Theorem~\ref{thmgen}, we can again consider only the case $n=2$. To compute an extension of $E_k$ so that the language becomes conditionally decomposable, we modify Algorithm~\ref{IsCD} to Algorithm~\ref{extension}, which uses more structural properties of the structure $\tilde{G}$. First, however, we explain the technique on an example.
  \begin{figure}[hbt]
    \centering
    \includegraphics[scale=.9]{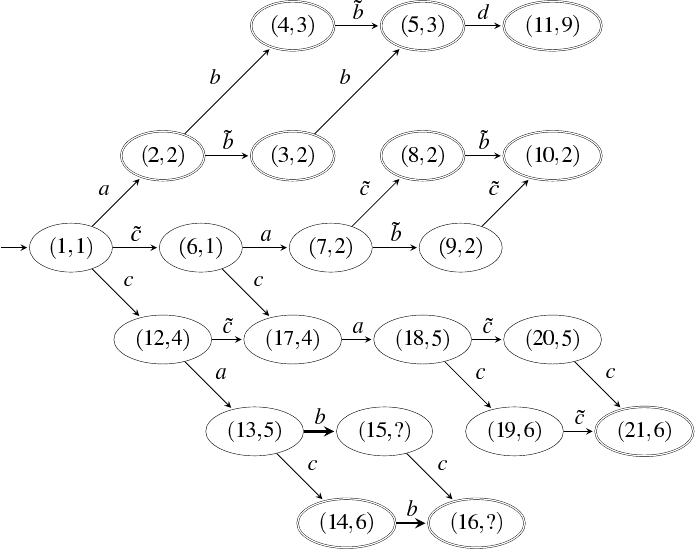}
    \caption{Generator $\tilde{G}$ with the corresponding states of $G\times \tilde{G}$. Note that transitions $\delta(5,b)$ and $\delta(6,b)$ are not defined in $G$, and, therefore, they violate conditional decomposability of the language $L_m(G)$.}
    \label{figextension}
  \end{figure}
  \begin{example}
    Consider the generator $G$ and $\tilde{G}$ of Examples~\ref{ex2} and~\ref{ex2_2}. The main idea of this technique is to construct, step-by-step, the parallel composition of $G$ and $\tilde{G}$, and to verify that all the steps possible in $\tilde{G}$ are also possible in $G$. In Figure~\ref{figextension}, $\tilde{G}$ is extended with the states of $G$, written in the states of $\tilde{G}$. Note that after reading the string $ca$, the generator $\tilde{G}$ is in a state from which $b$ can be read, but $G$ being in state 5 can read only $c$. Because of this symbol $b$, the language $L_m(G)$ is not conditionally decomposable. The reader can verify that adding $b$ to $E_k$ results in the situation where $L_m(G)$ is conditionally decomposable with respect to $E_1$, $E_2$, and $E_k\cup\{b\}$.
  $\hfill\diamond$\end{example}
  
  Let a language $L$ be represented by the minimal generator $G=(Q,E,\delta,q_0,F)$ with the total transition function $\delta$ such that $L_m(G)=L$. Assume that alphabets $E_1$, $E_2$, $E_k$ satisfy $E_1\cap E_2 \subseteq E_k\subseteq E_1\cup E_2=E$, and see Algorithm~\ref{extension}.
  \begin{algorithm}[ht]
  \caption{Extension of $E_k$.}\label{extension}
    \begin{algorithmic}[1]
    \Procedure{Extension}{$G,E_1,E_2,E_k$}
      \State Compute $\tilde G$
      \State Compute ${\tt trim}(\tilde{G})$
        \Comment Now, we compute, step-by-step, the generator $H$ for ${\tt trim}(\tilde{G}) \parallel G$.
      \State Set $Q_H=\{((q_{0,1},q_{0,2}),q_0)\}$, a pair of initial states of $\tilde{G}$ and $G$
        \Comment The initial state of $H$.
      \ForAll {$((q_1,q_2),q)\in Q_H$}
        \ForAll {$a\in E\cup\tilde{E}_{1-k}\cup\tilde{E}_{2-k}$}
          \If {$a\in\tilde{E}_{1-k}\cup\tilde{E}_{2-k}$ and $\delta_{\tilde{G}}((q_1,q_2),a)!$}
            \State $\delta_H(((q_1,q_2),q),a)= (\delta_{\tilde{G}}((q_1,q_2),a),q)$
          \EndIf
          \If {$a\in E$ and $\delta_{\tilde{G}}((q_1,q_2),a)!$}
            \If {$\delta(q,a)!$}
              \State $\delta_H(((q_1,q_2),q),a)= (\delta_{\tilde{G}}((q_1,q_2),a),\delta(q,a))$
            \Else
              \If {$a\notin E_k$} 
                \State $E_k=E_k\cup\{a\}$ 
                \Comment{$a$ is allowed in $\tilde G$, but not in $G$; adding it to $E_k$ solves this problem.}
              \Else
                \State $E_k=E_k\cup\{b\}$, where $b\in E\setminus E_k$
              \EndIf
              \State Restart the procedure with the updated set $E_k$.
            \EndIf
          \EndIf
        \EndFor
      \EndFor
      \State \Return $E_k$.
    \EndProcedure
    \end{algorithmic}
  \end{algorithm}
  To prove that the algorithm is correct, note that it computes $L_m(\tilde{G})\cap \tilde{P}^{-1}(L_m(G))$ because $L_m(\tilde{G})=L_m({\tt trim}(\tilde{G}))$. If the condition on line 11 is always satisfied, it means that  $L_m(\tilde{G})\cap \tilde{P}^{-1}(L_m(G))=L_m(\tilde{G})$. In other words, $L_m(\tilde{G})\subseteq\tilde{P}^{-1}(L_m(G))$, which means by Theorem~\ref{thm3} that $L_m(G)$ is conditionally decomposable. On the other hand, if the condition on line 11 is not satisfied, there exists a string $s\in L({\tt trim}(\tilde{G}))=\overline{L_m(\tilde{G})}$ such that $\tilde{P}(s)\notin L(G)=\overline{L_m(G)}$, where the last equality follows from the assumption that $G$ is minimal. This implies that $\tilde{P}(L_m(\tilde{G}))\not\subseteq L_m(G)$, hence $L_m(G)$ is not conditionally decomposable by Theorem~\ref{lem1}. The algorithms halts because we have only a finite number of events to be added to $E_k$, and the language is conditionally decomposable for $E_k=E_1\cup E_2$.
  
  The complexity of this algorithm is $O(|E|^2\cdot |Q|^3)$, which follows from the complexity of Algorithm~\ref{IsCD} and the fact that, in the worst-case, we have to run the algorithm $|E|$ times. Note that the resulting extension depends on the order the states of $G$ and $\tilde G$ are examined. It should be clear that, in general, there might be different extensions (with respect to set inclusion) that correspond to different orders. This is a typical issue with algorithms extending the event sets in such a way that a particular property becomes true. There are examples where the algorithm does not construct the minimal possible extension. Note that to construct the minimal extension (with respect to set inclusion) is an NP-hard problem~\cite{JDEDS}.

  The following example demonstrates the situation where the event that causes the problem on line~11 already belongs to $E_k$. Thus, to solve the problem, another symbol from $E\setminus E_k$ must be added to $E_k$.
  \begin{example}
    Consider the generator $G$ depicted in Figure~\ref{fig02x}\subref{GGx}, where the corresponding alphabets are $E_1=\{a_1,u\}$, $E_2=\{a_2,u\}$ and $E_k=\{u\}$. The isomorphic generators $f_{1+k}(G)$ with renamed event $a_2$ and $f_{2+k}(G)$ with renamed event $a_1$ are depicted in Figure~\ref{fig02x}\subref{GG1x} and Figure~\ref{fig02x}\subref{GG2x}, respectively.
    \begin{figure}[ht]
      \centering
      \subfloat[Generator $G$.]{\label{GGx}\includegraphics[scale=.9]{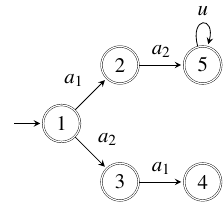}}
      \qquad
      \subfloat[Generator $f_{1+k}(G)$.]{\label{GG1x}\includegraphics[scale=.9]{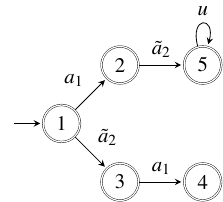}}
      \qquad
      \subfloat[Generator $f_{2+k}(G)$.]{\label{GG2x}\includegraphics[scale=.9]{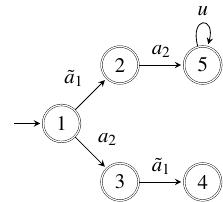}}
      \caption{Generators $G$, $f_{1+k}(G)$ and $f_{2+k}(G)$.}
      \label{fig02x}
    \end{figure}
    The parallel composition $\tilde{G}= f_{1+k}(G)\parallel f_{2+k}(G)$
    \begin{figure}[hb]
      \centering
      \includegraphics[scale=.33]{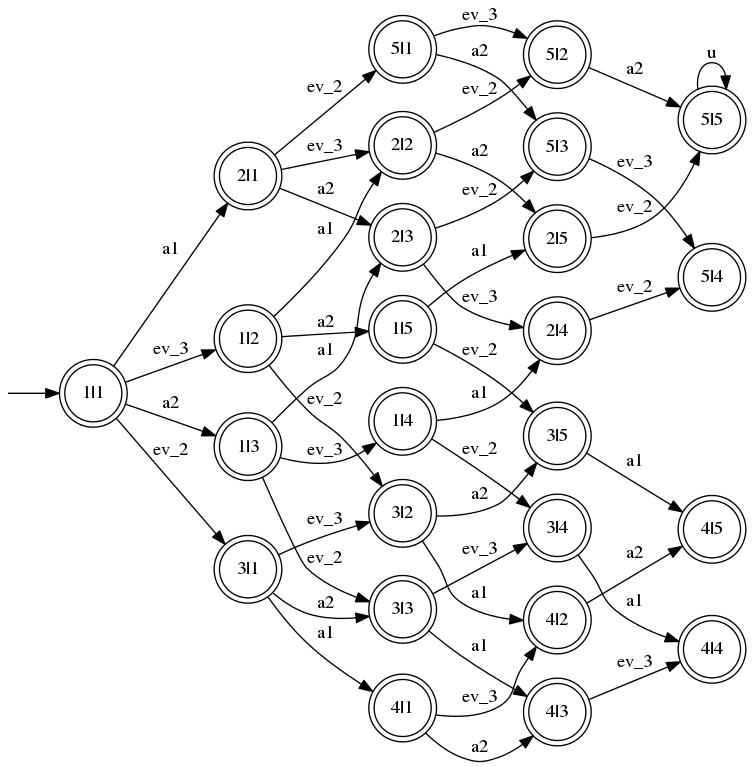}
      \caption{Generator $\tilde G$, where $E_1=\{u,a_1\}$, $E_2=\{u,a_2\}$, and $\tilde{a_1}=ev_3$, $\tilde{a_2}=ev_2$}
      \label{fig01x}
    \end{figure}
    and one can see that the string $\tilde a_1 a_2a_1 \tilde a_2 u$ belongs to $L_m(\tilde{G})$, but does not belong to $\tilde{P}^{-1}(L_m(G))$. By Theorem~\ref{lem1}, $L_m(G)$ is not conditionally decomposable with respect to $E_1$, $E_2$, $E_k$. 
    However, since $u$ belongs to $E_k$, another event that does not belong to $E_k$ must be added to $E_k$. Namely, either $a_1$ or $a_2$. 
  $\hfill\diamond$\end{example}

  This opens a field for potentially interesting heuristics. Indeed, to solve the problem on line~11, it does not make sense to add to $E_k$ an event that does not appear on the path leading to the problematic state. Thus, one could, for instance, store the last visited event from $E\setminus E_k$ that leads the generator $H = \tilde G\parallel G$ to a state where the problem was discovered. This event is then added to $E_k$ on line~17.

\section{Relationship of nonblockingness of coordinated systems to conditional decomposability}\label{cdandnonblocking}
	In this section, we study the relation between conditional decomposability and nonblockingness of coordinated discrete-event systems. A coordinated modular discrete-event system is a system composed (by parallel composition) of two or more subsystems. In this section, we consider the case of one central coordinator. Let $n\ge 2$, and let $G_i$, $i=1,2,\dots,n$, be generators over the respective alphabets $E_i$, $i=1,2,\dots,n$. The coordinated system $G$ is defined as $G=G_1\parallel G_2\parallel \ldots \parallel G_n\parallel G_k$, where $G_k$ is the coordinator over an alphabet $E_k$, which contains all shared events; namely, $E_{s}\subseteq E_k$, where $E_s$ is the set of all events that are shared by two or more components, defined as \[E_s=\bigcup_{i,j\in \{1,\dots ,n\}}^{i\neq j} (E_i\cap E_j)\,.\] This is a standard assumption in hierarchical decentralized control where the coordinator level plays a role of the high (abstracted) level of hierarchical control.
	
	In the following theorem, we show the relation between nonblockingness of a coordinated system and conditional decomposability of that system. First, however, we need the following auxiliary lemmas.
  \begin{lemma}[Proposition 4.1 in \cite{FLT}]\label{proj}
    Let $L\subseteq E^*$ be a language and $P_k: E^*\to E_k^*$ be a projection with $E_k\subseteq E$, for some alphabet $E$. Then, $P_k(\overline{L})=\overline{P_k(L)}$.
  \end{lemma}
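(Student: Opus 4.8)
The plan is to prove the set equality $P_k(\overline{L})=\overline{P_k(L)}$ by establishing the two inclusions separately. Throughout I would rely on the fact that $P_k$ is a homomorphism, so that $P_k(st)=P_k(s)P_k(t)$ for all $s,t\in E^*$, together with the observation that the projection of a prefix is a prefix of the projection.

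First I would show the routine inclusion $P_k(\overline{L})\subseteq\overline{P_k(L)}$. Take any $v\in P_k(\overline{L})$, so that $v=P_k(t)$ for some $t\in\overline{L}$. By definition of the prefix closure there exist $s\in L$ and $u\in E^*$ with $s=tu$. The homomorphism property then gives $P_k(s)=P_k(t)P_k(u)=vP_k(u)$, so $v$ is a prefix of $P_k(s)\in P_k(L)$, whence $v\in\overline{P_k(L)}$.

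For the reverse inclusion $\overline{P_k(L)}\subseteq P_k(\overline{L})$, take any $v\in\overline{P_k(L)}$. Then there is $w\in P_k(L)$ with $w=vv'$ for some $v'\in E_k^*$, and $w=P_k(s)$ for some $s\in L$. The key step is to exhibit a prefix $t$ of $s$ with $P_k(t)=v$. Intuitively this follows from reading $s$ symbol by symbol from left to right: the projection accumulates exactly the symbols of $s$ lying in $E_k$, so there is a prefix $t$ of $s$ after which the accumulated projection first equals $v$. Since $t$ is a prefix of $s\in L$, we have $t\in\overline{L}$, and therefore $v=P_k(t)\in P_k(\overline{L})$.

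The only point requiring care, and thus the main obstacle, is this key step in the second inclusion: producing the prefix $t$ of $s$ whose projection is exactly the prescribed prefix $v$ of $P_k(s)$. This is where the concrete structure of a natural projection matters, since $P_k$ erases symbols individually without reordering or merging them, so every prefix of $P_k(s)$ is realized as the projection of some prefix of $s$. A clean way to make this rigorous is induction on $|s|$: the claim is trivial for $s=\eps$, and in the inductive step one distinguishes whether the last symbol $a$ of $sa$ lies in $E_k$ or in $E\setminus E_k$, using $P_k(sa)=P_k(s)$ in the latter case and $P_k(sa)=P_k(s)a$ in the former. Combining the two inclusions yields the stated equality.
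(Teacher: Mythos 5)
Your proof is correct, and there is nothing in the paper to compare it against: the paper does not prove this lemma at all, but imports it as Proposition~4.1 of \cite{FLT}. Your blind attempt therefore supplies a self-contained elementary proof of a cited result. Both inclusions are handled properly. The inclusion $P_k(\overline{L})\subseteq\overline{P_k(L)}$ is indeed immediate from the homomorphism property, since $s=tu\in L$ gives $P_k(s)=P_k(t)P_k(u)$. You also correctly identify the only nontrivial point, namely that every prefix $v$ of $P_k(s)$ is realized as $P_k(t)$ for some prefix $t$ of $s$, and your justification is sound: this is exactly where the \emph{natural} projection structure enters, because $P_k$ maps each letter to itself or to $\eps$ without reordering, and a general homomorphism (e.g., one mapping a single letter to a word of length two) would not satisfy the stated equality, only the first inclusion. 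Your induction on $|s|$ with the case split $P_k(sa)=P_k(s)a$ for $a\in E_k$ versus $P_k(sa)=P_k(s)$ for $a\notin E_k$ closes this cleanly: in the first case the new prefix $P_k(s)a$ of the projection is realized by the prefix $sa$ itself, and all shorter prefixes are covered by the induction hypothesis; in the second case nothing new needs realizing. One could phrase the witness more constructively (take $t$ to be the shortest prefix of $s$ with $P_k(t)=v$), but your inductive formulation is equally rigorous. In short: a correct, standard, and appropriately careful proof of a result the paper leaves to the literature.
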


  \begin{lemma}\label{simple}
    Let $E$ be an alphabet, $L\subseteq E^*$ be a language, and $P_k: E^*\to E_k^*$ be a projection with $E_k\subseteq E$, for some alphabet $E$. Then, $L\parallel P_k(L)=L$.
  \end{lemma}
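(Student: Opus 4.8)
The plan is to unfold the definition of the parallel composition and reduce the claim to the elementary inclusion $L\subseteq P_k^{-1}(P_k(L))$, which is exactly the kind of fact already used elsewhere in this paper (e.g.\ in the remark that $K\subseteq P_{i+k}^{-1}[P_{i+k}(K)]$). No induction or automata construction is needed; the whole statement is a one-line consequence once the alphabets are set up correctly.

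First I would identify the alphabets entering the parallel composition. The language $L$ is a language over $E$, while $P_k(L)$ is a language over $E_k$, and since $E_k\subseteq E$ their union is simply $E$. Consequently, in the defining formula $L_1\parallel L_2=P_1^{-1}(L_1)\cap P_2^{-1}(L_2)$, the projection $P_1:(E\cup E_k)^*\to E^*$ is the identity map on $E^*$, and the projection $P_2:(E\cup E_k)^*\to E_k^*$ coincides with $P_k$ itself. Substituting these in gives
\[
  L\parallel P_k(L)=P_1^{-1}(L)\cap P_2^{-1}(P_k(L))=L\cap P_k^{-1}(P_k(L))\,.
\]

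The remaining step is to observe that $L\cap P_k^{-1}(P_k(L))=L$. The inclusion $\subseteq$ is trivial, and the reverse inclusion follows from the standard property that every word of $L$ is mapped by $P_k$ into $P_k(L)$ and hence lies in $P_k^{-1}(P_k(L))$; that is, $L\subseteq P_k^{-1}(P_k(L))$. Intersecting with $L$ therefore leaves $L$ unchanged, which yields the claim. I do not anticipate any genuine obstacle here: the only point requiring care is getting the alphabets right in the definition of the parallel composition, namely that $E\cup E_k=E$ collapses the first projection to the identity on $E^*$ so that $P_2$ reduces to $P_k$. Once this observation is made, the equality is immediate.
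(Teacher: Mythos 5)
Your proof is correct and follows essentially the same route as the paper, which also unfolds the parallel composition to $L\parallel P_k(L)=L\cap P_k^{-1}P_k(L)$ and then invokes the elementary inclusion $L\subseteq P_k^{-1}P_k(L)$. Your version merely makes explicit the alphabet bookkeeping (that $E\cup E_k=E$ collapses the first projection to the identity), which the paper leaves implicit.
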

  \begin{proof}
    By definition, $L\parallel P_k(L) = L\cap P_k^{-1}P_k(L)$, and it is not hard to see that $L\subseteq P_k^{-1}P_k(L)$.
  \qed\end{proof}

  \begin{theorem}\label{thm1}
    Let $n\ge 2$, and let $G_i$, for $i=1,2,\ldots,n$, be generators over the alphabets $E_i$, $i=1,2,\ldots,n$, respectively. Let $G_k$ be a generator over an alphabet $E_k$ such that $E_{s} \subseteq E_k \subseteq \bigcup_{i=1}^{n} E_i$. Then, the coordinated system $G=G_1 \parallel G_2\parallel \ldots \parallel G_n \parallel G_k$ is nonblocking if and only if the following conditions both hold:
    \begin{enumerate}
      \item $G_i \parallel G_k \parallel \bigparallel _{j\neq i} P_{k}(G_j)$, for all $i=1,2,\ldots,n$, are nonblocking and
      \item $\overline{L_m(G)}$ is conditionally decomposable with respect to the alphabets $E_1, E_2,\ldots, E_n, E_k$.
    \end{enumerate}
  \end{theorem}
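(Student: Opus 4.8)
The plan is to reduce everything to the projected languages $P_{i+k}(\overline{K})$ and $P_{i+k}(L(G))$, where $K := L_m(G)$, and then to read off nonblockingness from them. Writing $\hat G_i := G_i \parallel G_k \parallel \bigparallel_{j\neq i} P_k(G_j)$ for the abstracted systems of Condition~1, the cornerstone is the pair of unconditional identities
\[
  L_m(\hat G_i) = P_{i+k}(K)
  \qquad\text{and}\qquad
  L(\hat G_i) = P_{i+k}(L(G)),
\]
which I would establish before touching either implication.

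First I would prove these identities. Since $L(G) = L(G_1)\parallel\cdots\parallel L(G_n)\parallel L(G_k)$ and likewise for $L_m$, I apply Lemma~\ref{lemma:Wonham} repeatedly to push $P_{i+k}$ inside the parallel composition; its hypothesis holds at every step because each pairwise intersection $E_\ell\cap E_m$ lies in $E_s\subseteq E_k\subseteq E_{i+k}$. After distributing, $P_{i+k}$ is the identity on the $E_i$- and $E_k$-components (both alphabets sit inside $E_{i+k}$), and on each remaining $L(G_j)$, $j\neq i$, it coincides with $P_k$ because $E_j\cap E_{i+k}=E_j\cap E_k$ (again $E_j\cap E_i\subseteq E_s\subseteq E_k$). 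Using $L(P_k(G_j))=P_k(L(G_j))$ and $L_m(P_k(G_j))=P_k(L_m(G_j))$, this is exactly $L(\hat G_i)$ and $L_m(\hat G_i)$, respectively.

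With the identities in hand the two conditions translate cleanly. By Lemma~\ref{proj}, $\overline{L_m(\hat G_i)}=\overline{P_{i+k}(K)}=P_{i+k}(\overline{K})$, so Condition~1 (every $\hat G_i$ nonblocking) is equivalent to $P_{i+k}(\overline{K})=P_{i+k}(L(G))$ for all $i$, while Condition~2 is, by Definition~\ref{cd}, the statement $\overline{K}=\bigparallel_{i=1}^{n}P_{i+k}(\overline{K})$. For the forward direction I assume $G$ nonblocking, i.e.\ $\overline{K}=L(G)$: Condition~1 is then immediate, and Condition~2 follows because $L(G)=L(G_1)\parallel\cdots\parallel L(G_n)\parallel L(G_k)$ already exhibits $\overline{K}$ as a parallel composition over the required alphabets (absorb $L(G_k)$ into the first factor), so the lemma following Definition~\ref{cd} gives conditional decomposability.

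For the backward direction, $\overline{K}\subseteq L(G)$ always holds; conversely $L(G)\subseteq\bigparallel_{i=1}^{n}P_{i+k}(L(G))$ holds for any language (the trivial direction $L(G)\subseteq P_{i+k}^{-1}P_{i+k}(L(G))$ for each $i$), and Condition~1 then rewrites each factor as $P_{i+k}(\overline{K})$ while Condition~2 collapses $\bigparallel_i P_{i+k}(\overline{K})$ to $\overline{K}$, yielding $L(G)\subseteq\overline{K}$ and hence nonblockingness. The main obstacle is the bookkeeping in the distribution identity --- checking that the hypothesis of Lemma~\ref{lemma:Wonham} survives each step of the iterated projection and that $P_{i+k}$ genuinely collapses to $P_k$ on the off-index components; once that is secured, both implications are one-line inclusion chases.
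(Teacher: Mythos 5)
Your proof is correct and takes essentially the same route as the paper's: both distribute $P_{i+k}$ over the parallel composition via Lemma~\ref{lemma:Wonham} to identify $L(\hat{G}_i)=P_{i+k}(L(G))$ and $L_m(\hat{G}_i)=P_{i+k}(L_m(G))$ for $\hat{G}_i = G_i \parallel G_k \parallel \bigparallel_{j\neq i} P_k(G_j)$, commute prefix closure with projection via Lemma~\ref{proj}, and then read off both implications from the resulting inclusion chains. Your only deviations are organizational --- you isolate those two identities up front and derive Condition~2 in the forward direction from the characterization lemma following Definition~\ref{cd} (absorbing $L(G_k)$ into one factor), where the paper instead threads the same facts through the chain~(\ref{eqq}) using idempotence of $\parallel$ and Lemma~\ref{simple}.
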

	\begin{proof}
		The following always holds for all $i=1,2,\ldots,n$, $n\ge 2$:
    \begin{equation}\label{eqq}
      \begin{split}
        \overline{L_m(G)}
        & \subseteq P_{1+k}(\overline{L_m(G)}) \parallel \ldots \parallel  P_{n+k}(\overline{L_m(G)})\\
        & \subseteq P_{1+k}(L(G)) \parallel \ldots \parallel  P_{n+k}(L(G))\\
        & = L(G_1 \parallel G_k \parallel P_k(G_2 \parallel G_3 \parallel \ldots \parallel G_n)) \\
        & \quad \parallel  L(G_2 \parallel G_k \parallel P_k(G_1 \parallel G_3 \parallel \ldots \parallel G_n))\\
        & \qquad \vdots\\
        & \quad \parallel  L(G_n \parallel G_k \parallel P_k(G_1 \parallel G_2 \parallel \ldots \parallel G_{n-1}))\\
        & = L(G)\,,
      \end{split}
    \end{equation}
    where the last equation follows from the idempotent property of the parallel composition and Lemma~\ref{simple}. If the language $\overline{L_m(G)}$ is nonblocking, then the inclusions become equalities. Thus, from the first equality, we get that the language $\overline{L_m(G)}$ is conditionally decomposable as required in item 2 of the theorem. Similarly, for all $i=1,2,\ldots,n$,
    \begin{align*}
      P_{i+k}(\overline{L_m(G)}) & = \overline{P_{i+k}(L_m(G))}
        = \overline{L_m(G_i\parallel G_k) \parallel P_{i+k}(L_m(\parallel _{j\neq i} G_j))}\\
      & = \overline{L_m(G_i \parallel G_k \parallel P_{i+k}(\parallel _{j\neq i} G_j)}\\
      & \subseteq \overline{L_m(G_i)} \parallel \overline{L_m(G_k)} \parallel \overline{P_{i+k}(\parallel _{j\neq i} L_m(G_j))}\\
      & \subseteq L(G_i) \parallel L(G_k) \parallel P_{i+k}(\parallel _{j\neq i} L(G_j))\\
      & = P_{i+k}(L(G))\,,
    \end{align*}
    where the first equality holds by Lemma~\ref{proj}, the second equality holds by Lemma~\ref{lemma:Wonham} because we project to the alphabet $E_i\cup E_k$ that includes the intersection of $E_i\cup E_k$ and $\bigcup_{j\neq i} E_j$, namely $E_k$. Finally, the last equality holds by the same argument as the second equality. Hence, if the global plant is nonblocking, the inclusions become equalities, which means that the subsystems $G_i \parallel G_k \parallel P_{i+k}(\parallel _{j\neq i} G_j)=G_i \parallel G_k \parallel \bigparallel _{j\neq i} P_{k}(G_j)$ are nonblocking. 
    
		On the other hand, from the assumptions 1 and 2 we immediately get that both inclusions in $(\ref{eqq})$ are equalities. Thus, the implication holds.
	\qed\end{proof}
  
	Note that Condition 2 of Theorem~\ref{thm1} does not hold in general because one inclusion of conditional decomposability, namely $\overline{L_m(G)} \subseteq P_{1+k}(\overline{L_m(G)})\parallel P_{2+k}(\overline{L_m(G)})$, can be strict. Thus, the prefix closure of the marked language $\overline{L_m(G_1\parallel G_2\parallel G_k)}$ of the coordinated system consisting of subsystems $G_1$ and $G_2$ and a coordinator $G_k$ is not in general conditionally decomposable with respect to alphabets $E_1$, $E_2$, $E_k$ as demonstrated in the following example.
	\begin{example}\label{ex1}
		Consider two subsystems $G_1$ and $G_2$, and a coordinator $G_k$ as depicted in Figure~\ref{fig00}, where the corresponding alphabets are $E_1=\{a,b,d\}$, $E_2=\{a,c,d\}$, and $E_k=\{a,d\}$.
		\begin{figure}[ht]
      \centering
      \subfloat[Generator $G_1$.]{\label{G1}\includegraphics[scale=1]{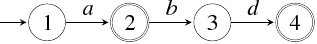}}
      \qquad\qquad
      \subfloat[Generator $G_k$.]{\label{Gk}\includegraphics[scale=1]{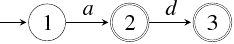}}
      \qquad\qquad
      \subfloat[Generator $G_2$.]{\label{G2}\includegraphics[scale=1]{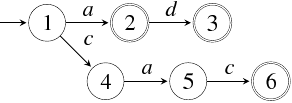}}
      \qquad\qquad\vspace{1mm}
      \subfloat[Generator $G_1\parallel G_2\parallel G_k$.]{\label{G}\includegraphics[scale=1]{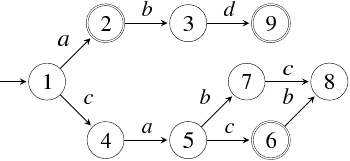}}
      \caption{Generators $G_1$, $G_2$, $G_k$, and $G_1\parallel G_2\parallel G_k$.}
      \label{fig00}
    \end{figure}
		 Then, we can consider the string $cacb$ and see that its projection $P_{1+k}(cacb)=ab$ belongs to the language $P_{1+k}(\overline{L_m(G_1\parallel G_2\parallel G_k)})$, and the projection $P_{2+k}(cacb)=cac$ belongs to the language $P_{2+k}(\overline{L_m(G_1\parallel G_2\parallel G_k)})$. However, this means that the string $cacb$ belongs to the composition $P_{1+k}(\overline{L_m(G_1\parallel G_2\parallel G_k)}) \parallel P_{1+k}(\overline{L_m(G_1\parallel G_2\parallel G_k)})$. On the other hand, the string $cacb$ is not a prefix of any string belonging to the marked language $\overline{L_m(G_1\parallel G_2\parallel G_k)}$ of the coordinated system as is easily seen in Figure~\ref{fig00}\subref{G}. Thus, the language is not conditionally decomposable with respect to alphabets $E_1$, $E_2$, $E_k$.
	$\hfill\diamond$\end{example}

  Note that it follows from~(\ref{eqq}) that conditional decomposability is a weaker condition than nonblockingness. This is because conditional decomposability requires only the first inclusion to be equality, while nonblockingness requires both the inclusions to be equalities.
  The fundamental question is whether it is possible to decide in a distributed way without computing the whole plant whether $\overline{L_m(\parallel_{i=1}^{n} G_i\parallel G_k)}$ is conditionally decomposable. The algorithm described in the previous section requires the computation of the whole plant.

  A specific choice of $L_m(G_k)\subseteq \bigcap_{i=1}^{n}P_k(L_m(G_i))$, respectively $L_m(G_k)=\bigcap_{i=1}^{n}P_k(L_m(G_i))$, yields Corollaries~\ref{cor1} and \ref{cor2} below, respectively.
  \begin{corollary}\label{cor1}
    Let $G_1, G_2,\ldots,G_n, G_k$ be nonblocking generators over the alphabets $E_1,E_2,\dots,E_n, E_k$, respectively, such that $E_{s}\subseteq E_k\subseteq \bigcup_{i=1}^{n} E_i$. Assume that $L_m(G_k)\subseteq \bigcap_{i=1}^{n}P_k(L_m(G_i))$. Then, the coordinated system $G=G_1 \parallel G_2\parallel \ldots \parallel G_n \parallel G_k$ is nonblocking if and only if the following conditions both hold:
    \begin{enumerate}
      \item $G_i\parallel G_k$ are nonblocking, for all $i=1,2,\dots,n$, and
      \item $\overline{L_m(G)}$ is conditionally decomposable with respect to the alphabets $E_1,E_2,\ldots, E_n, E_k$.
    \end{enumerate}
  \end{corollary}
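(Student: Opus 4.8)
The plan is to derive the corollary directly from Theorem~\ref{thm1}. Condition~2 of the two statements is literally identical, so the only work is to show that, under the additional hypothesis $L_m(G_k)\subseteq \bigcap_{i=1}^{n}P_k(L_m(G_i))$, Condition~1 of Theorem~\ref{thm1}---nonblockingness of $G_i \parallel G_k \parallel \bigparallel_{j\neq i} P_k(G_j)$---is equivalent to Condition~1 of the corollary, namely nonblockingness of $G_i\parallel G_k$. I would achieve this by proving that the generators $G_k \parallel \bigparallel_{j\neq i} P_k(G_j)$ and $G_k$ have the same marked and the same generated languages, so that composing either with $G_i$ produces language-equivalent generators.

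Writing $H_i = G_k \parallel \bigparallel_{j\neq i} P_k(G_j)$, I would first note that $G_k$ and all the $P_k(G_j)$ live over the common alphabet $E_k$, so the parallel composition reduces to intersection of languages: $L_m(H_i)=L_m(G_k)\cap \bigcap_{j\neq i} P_k(L_m(G_j))$. The hypothesis gives $L_m(G_k)\subseteq \bigcap_{j\neq i} P_k(L_m(G_j))$, and hence $L_m(H_i)=L_m(G_k)$ immediately.

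Next I would establish the analogous equality for the generated (prefix-closed) languages, $L(H_i)=L(G_k)$. Since $G_k$ and the $G_j$ are nonblocking, $L(G_k)=\overline{L_m(G_k)}$ and, by the definition of $P_k(G_j)$ together with Lemma~\ref{proj}, $L(P_k(G_j))=P_k(\overline{L_m(G_j)})=\overline{P_k(L_m(G_j))}$. From $L_m(G_k)\subseteq P_k(L_m(G_j))$ for each $j\neq i$, taking prefix closures yields $\overline{L_m(G_k)}\subseteq \overline{P_k(L_m(G_j))}=L(P_k(G_j))$, so $L(G_k)\subseteq \bigcap_{j\neq i} L(P_k(G_j))$ and therefore $L(H_i)=L(G_k)$.

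Finally, since parallel composition respects language equalities, $L_m(G_i\parallel H_i)=L_m(G_i\parallel G_k)$ and $L(G_i\parallel H_i)=L(G_i\parallel G_k)$. Nonblockingness of a generator is exactly the equality $\overline{L_m}=L$ of its two languages, so $G_i\parallel G_k\parallel \bigparallel_{j\neq i} P_k(G_j)$ is nonblocking if and only if $G_i\parallel G_k$ is nonblocking; this yields the equivalence of the two versions of Condition~1, and the corollary follows from Theorem~\ref{thm1}. I expect the delicate step to be the equality of the prefix-closed generated languages rather than the marked ones: the marked-language equality drops out of the hypothesis at once, whereas for the generated languages one must genuinely use the nonblockingness of both $G_k$ and the $G_j$ together with Lemma~\ref{proj} to pass the inclusion through the prefix closure.
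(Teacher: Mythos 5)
Your proof is correct and follows essentially the same route as the paper: reduce to Theorem~\ref{thm1} by showing that $G_k \parallel \bigparallel_{j\neq i} P_k(G_j)$ collapses to $G_k$, using the hypothesis directly for the marked languages and the prefix-closure argument via Lemma~\ref{proj} and nonblockingness of $G_k$ and the $G_j$ for the generated languages. If anything, you are slightly more explicit than the paper, which treats only the generated-language equality $L(G_k)\parallel P_k(L(G_i))=L(G_k)$ in detail and leaves the marked-language half of the generator equality implicit.
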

  \begin{proof}
    By the assumption, $L_m(G_k)\subseteq \bigcap_i P_k(L_m(G_i))$. Applying the prefix closure to the previous inclusion results in the inclusion 
    $L(G_k) = \overline{L_m(G_k)} \subseteq \overline{P_k(\bigcap_i L_m(G_i))} \subseteq \bigcap_i P_k(\overline{L_m(G_i)}) = \bigcap_i P_k(L(G_i)) = \bigparallel_i P_k(L(G_i))$. From this, it follows that $L(G_k)\parallel P_k(L(G_i))=L(G_k)$, for $i=1,2,\ldots,n$, which implies that $G_i\parallel G_k\parallel \bigparallel_{j\neq i} P_k(G_j)=G_i\parallel G_k$. Thus, item 1 of Theorem~\ref{thm1} reduces to item 1 of this corollary.
  \qed\end{proof}

  \begin{corollary}\label{cor2}
    Let $G_1, G_2,\ldots,  G_n, G_k$ be nonblocking generators over the alphabets $E_1, E_2,\ldots, E_n, E_k$, respectively, such that $E_s\subseteq E_k\subseteq \bigcup_{i=1}^{n}E_i$, and assume that $L_m(G_k)=\bigcap_{i=1}^{n}P_k(L_m(G_i))$. Then, the coordinated system $G =\ \parallel _{i=1}^n G_i \parallel G_k$ is nonblocking if and only if the following conditions both hold:
    \begin{enumerate}
       \item $G_i\parallel G_k$ are nonblocking, for all $i=1,2,\dots,n$, and
      \item $\overline{L_m(G_1\parallel G_2\parallel \ldots\parallel G_n)}$ is conditionally decomposable with respect to alphabets $E_1,E_2,\ldots, E_n, E_k$.
    \end{enumerate}
  \end{corollary}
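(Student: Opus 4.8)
The plan is to derive this corollary directly from Corollary~\ref{cor1} by showing that, under the stronger equality hypothesis $L_m(G_k)=\bigcap_{i=1}^{n}P_k(L_m(G_i))$, the coordinator $G_k$ becomes redundant in the marked behavior, so that $L_m(G)=L_m(G_1\parallel G_2\parallel\ldots\parallel G_n)$. Equality certainly implies the inclusion $L_m(G_k)\subseteq\bigcap_{i=1}^{n}P_k(L_m(G_i))$ required by Corollary~\ref{cor1}, and item~1 of both statements is identical. Hence the only point to settle is that condition~2 of Corollary~\ref{cor1}, conditional decomposability of $\overline{L_m(G)}$, coincides with condition~2 of Corollary~\ref{cor2}, conditional decomposability of $\overline{L_m(G_1\parallel\ldots\parallel G_n)}$. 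Both then reduce to the single identity $\overline{L_m(G)}=\overline{L_m(G_1\parallel\ldots\parallel G_n)}$, which is the heart of the argument.

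To establish that identity, I would first compute the $E_k$-projection of $M:=L_m(G_1\parallel\ldots\parallel G_n)=\parallel_{i=1}^{n}L_m(G_i)$. Iterating Lemma~\ref{lemma:Wonham}, whose hypothesis holds at each stage because $E_i\cap\bigl(\bigcup_{j\neq i}E_j\bigr)=\bigcup_{j\neq i}(E_i\cap E_j)\subseteq E_s\subseteq E_k$, yields $P_k(M)=\bigparallel_{i=1}^{n}P_k(L_m(G_i))=\bigcap_{i=1}^{n}P_k(L_m(G_i))$, the last equality holding because the projected languages live over sub-alphabets of $E_k$, so that their parallel composition is their intersection. By the equality hypothesis this gives $P_k(M)=L_m(G_k)$. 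Consequently $L_m(G)=M\parallel L_m(G_k)=M\parallel P_k(M)$, and applying Lemma~\ref{simple} to $M$ with the projection $P_k$ gives $M\parallel P_k(M)=M$. Thus $L_m(G)=M$, and taking prefix closures produces the desired $\overline{L_m(G)}=\overline{M}$.

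With this identity in hand, condition~2 of Corollary~\ref{cor2} is literally condition~2 of Corollary~\ref{cor1}, so the equivalence asserted here is exactly the one already proved for Corollary~\ref{cor1}, and the result follows by direct appeal to that corollary. The step I expect to require the most care is the projection identity $P_k(M)=\bigcap_{i}P_k(L_m(G_i))$: the iterated application of Lemma~\ref{lemma:Wonham} must be arranged so that the intersection alphabet of the partial product with each newly adjoined component stays inside $E_k$, and one must track that each $P_k(L_m(G_i))$ is a priori a language over $E_i\cap E_k$ rather than over all of $E_k$, so that the passage from parallel composition to intersection is fully justified. Everything else is a routine combination of Lemma~\ref{simple} and the already-established Corollary~\ref{cor1}.
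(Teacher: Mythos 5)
Your proposal is correct and takes essentially the same route as the paper: the paper's own proof likewise reduces Corollary~\ref{cor2} to Corollary~\ref{cor1} by substituting $L_m(G_k)=\bigcap_{i=1}^{n}P_k(L_m(G_i))$ and invoking Lemma~\ref{simple} to obtain $\bigparallel_{i=1}^{n} L_m(G_i)\parallel L_m(G_k)=\bigparallel_{i=1}^{n} L_m(G_i)$, hence $\overline{L_m(G)}=\overline{L_m(G_1\parallel G_2\parallel\ldots\parallel G_n)}$, which turns item~2 of Corollary~\ref{cor1} into item~2 of Corollary~\ref{cor2}. The only difference is presentational: you spell out the iterated application of Lemma~\ref{lemma:Wonham} giving $P_k\bigl(\bigparallel_{i=1}^{n} L_m(G_i)\bigr)=\bigparallel_{i=1}^{n}P_k(L_m(G_i))$ and the identification of this parallel composition with the intersection (all projected languages being regarded as languages over $E_k$, the codomain of $P_k$), steps the paper leaves implicit.
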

  \begin{proof}
    The proof follows immediately from the previous corollary and the fact that $\bigparallel_i L_m(G_i)\parallel L_m(G_k) = \left(\bigparallel_i L_m(G_i)\right) \parallel \left(\bigparallel_i P_k(L_m(G_i))\right)$, which is equal to $\bigparallel_i L_m(G_i)$ by Lemma~\ref{simple}, which reduces item 2 of Corollary~\ref{cor1} to the form of item 2 of this corollary.
  \qed\end{proof}

  The last corollary is particularly interesting because the coordinated modular discrete-event system coincides with the original plant and, therefore, nonblockingness of the original plant itself can be checked using the approach based on a coordinator, provided that we can verify item 2 in a distributed way. 
  
  The approach discussed above is based on projections, and the only known sufficient condition ensuring that the projected automaton is smaller with respect to the number of states than the original one is the observer property mentioned below. This topic requires further investigation because the observer property is only a sufficient condition, not necessary; there are examples of projected automata that are smaller than original automata without the projections satisfying the observer property. For completeness, however, we now discuss the case of projections satisfying the observer property and show that it corresponds to the known results discussed in \cite{FLT} and in references therein.

  Finally, we mention that in practice one central coordinator is particularly useful for loosely coupled subsystems, where the interaction between the subsystems (via synchronisation) is not too strong. Otherwise, a general multilevel hierarchy approach should be adopted, where the subsystems are aggregated into groups that are only loosely coupled. This is, however, very technical and left for a future study.

\subsection{Observer property}
  The previous results are of interest in the case the projected systems $P_k(G_i)$, for $i=1,2,\ldots,n$, are significantly smaller than the original systems $G_i$. So far, the only known condition ensuring this is a so-called {\em observer property}. 

  \begin{definition}[Observer property]\label{observer}
    Let $E_k\subseteq E$ be alphabets. A projection $P_k : E^* \to E_k^*$ is an {\em $L$-observer} for a language $L\subseteq E^*$ if the following holds: for all strings $t\in P(L)$ and $s\in \overline{L}$, if $P(s)$ is a prefix of $t$, then there exists $u\in E^*$ such that $su\in L$ and $P(su)=t$.
  \end{definition}

  The following lemma proves that if the projections are observers, then item 2 of the previous results can be eliminated because it is always satisfied.
  \begin{lemma}\label{lem1c}
    Let $G_i$, $1,2,\ldots,n$, $n\ge 2$, and $G_k$ be generators over the alphabets $E_i$, $i=1,2,\ldots,n$, and $E_k$, respectively, such that $E_s \subseteq E_k\subseteq \bigcup_i E_i$, and denote $G=\bigparallel_i G_i \parallel G_k$. If the projections $P^{i+k}_k$ are $P_{i+k}(L_m(G))$-observers, for $i=1,2,\ldots,n$, then the language $\overline{L_m(G)}$ is conditionally decomposable with respect to $E_i$, $i=1,2,\ldots,n$, and $E_k$.
  \end{lemma}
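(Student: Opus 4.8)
The plan is to reduce conditional decomposability of $\overline{L_m(G)}$ to a nonconflictness statement and then to extract that nonconflictness from the observer hypotheses. Write $L = L_m(G)$ and $M_i = P_{i+k}(L)$. Since the inclusion $\overline{L} \subseteq P_{1+k}(\overline{L}) \parallel \cdots \parallel P_{n+k}(\overline{L})$ is always available from~(\ref{eqq}), and $P_{i+k}(\overline{L}) = \overline{M_i}$ by Lemma~\ref{proj}, only the reverse inclusion $\bigparallel_{i=1}^{n} \overline{M_i} \subseteq \overline{L}$ is in question. First I would observe that the marked language $L$ is already conditionally decomposable: writing $L = (L_m(G_1)\parallel L_m(G_k)) \parallel L_m(G_2) \parallel \cdots \parallel L_m(G_n)$ exhibits $L$ as a parallel composition of languages over $E_{1+k}, E_{2+k}, \ldots, E_{n+k}$, so the lemma following Definition~\ref{cd} gives $L = \bigparallel_{i=1}^{n} M_i$. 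Hence $\overline{L} = \overline{\bigparallel_i M_i}$, and the goal $\bigparallel_i \overline{M_i} \subseteq \overline{L}$ becomes exactly the nonconflictness identity $\bigparallel_i \overline{M_i} = \overline{\bigparallel_i M_i}$ for the family $\{M_i\}$, whose alphabets $E_{i+k}$ pairwise intersect precisely in $E_k$ (because $E_i \cap E_j \subseteq E_s \subseteq E_k$).

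To prove this nonconflictness I would argue by a lifting-and-merging construction. Take any $w \in \bigparallel_i \overline{M_i}$, so $P_{i+k}(w) \in \overline{M_i}$ for every $i$; the aim is to find $s \in \bigparallel_i M_i = L$ with $w$ a prefix of $s$. Put $t_k = P_k(w)$; since $P_k(M_i) = P_k(L)$ for all $i$, one checks $t_k \in \overline{P_k(L)}$, so fix a single marked string $\tau \in P_k(L)$ with $t_k$ a prefix of $\tau$. Now apply the observer hypothesis to each component: because $P^{i+k}_k$ is an $M_i$-observer and $P^{i+k}_k(P_{i+k}(w)) = t_k$ is a prefix of $\tau \in P_k(M_i)$, there is $u_i \in E_{i+k}^*$ with $P_{i+k}(w)\,u_i \in M_i$ and $P_k(P_{i+k}(w)\,u_i) = \tau$. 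The decisive point is that the same $\tau$ serves all $i$, so the extensions $P_{i+k}(w)\,u_i$ share the common $E_k$-image $\tau$; equivalently, all tails $u_i$ satisfy $P_k(u_i) = \rho$, where $\tau = t_k \rho$.

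It then remains to merge. Since the private alphabets $E_i \setminus E_k$ are pairwise disjoint and the $u_i$ agree on the shared events $E_k$ (all projecting to $\rho$), the strings $u_i$ can be interleaved along their common $E_k$-backbone into a single word $r$ with $P_{i+k}(r) = u_i$ for all $i$; concretely, decompose each $u_i$ around the letters of $\rho$ and concatenate the private blocks lying between successive shared letters. Setting $s = w\,r$ gives $P_{i+k}(s) = P_{i+k}(w)\,u_i \in M_i$, whence $s \in \bigparallel_i M_i = L$ and $w$ is a prefix of $s$, so $w \in \overline{L}$. Combined with the always-valid inclusion this yields $\overline{L} = \bigparallel_i P_{i+k}(\overline{L})$, which is precisely conditional decomposability.

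I expect the main obstacle to be the merging step together with the insistence on a common $\tau$: one must apply the observer property with one and the same high-level target $\tau$ to every component so that the local extensions $u_i$ become synchronizable on $E_k$, and then verify that interleaving along the $E_k$-backbone is simultaneously consistent with each $P_{i+k}$. This is exactly where the structural hypotheses $E_i \cap E_j \subseteq E_k$ (disjoint private alphabets) and $P_k(M_i) = P_k(L)$ (a common interface over $E_k$) are essential; without them the local extensions could disagree on shared events and no global witness $s$ would exist.
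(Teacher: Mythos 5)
Your proof is correct in substance but takes a genuinely different route from the paper at the key step. Both arguments share the same skeleton: reduce conditional decomposability of $\overline{L}$ to the nonconflictness identity $\bigparallel_{i=1}^{n}\overline{M_i}=\overline{\bigparallel_{i=1}^{n}M_i}$ for $M_i=P_{i+k}(L)$, using Lemma~\ref{proj} and the identity $L=\bigparallel_i M_i$. Where the paper then simply cites the result of \cite{pcl06} --- nonconflictness of the $M_i$ under the observer hypotheses is equivalent to nonconflictness of their $E_k$-projections, which is trivial here since $P_k(M_i)=P_k(L)$ for all $i$ --- you prove the needed instance of that result from scratch: you pick a single marked target $\tau\in P_k(L)$ extending $P_k(w)$, apply the observer property to every component with the \emph{same} $\tau$ so that all local extensions $u_i$ project to the same $E_k$-tail $\rho$, and then interleave the $u_i$ along the common $E_k$-backbone, using that the private alphabets $E_i\setminus E_k$ are pairwise disjoint (as $E_i\cap E_j\subseteq E_s\subseteq E_k$). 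I checked the merging step: the decomposition $u_i=v_{i,0}\rho_1 v_{i,1}\cdots\rho_m v_{i,m}$ with $v_{i,\ell}\in(E_i\setminus E_k)^*$ and the concatenated interleaving $r$ indeed satisfy $P_{i+k}(r)=u_i$ because $(E_j\setminus E_k)\cap E_{i+k}=\emptyset$ for $j\neq i$, so $s=wr\in\bigparallel_i M_i=L$ as claimed. Your version is longer but self-contained and makes explicit exactly why the hypotheses work (a common interface $P_k(M_i)=P_k(L)$ plus disjoint private alphabets); the paper's version is shorter and situates the lemma within the known Feng/\cite{pcl06} line of nonconflictness results. In effect you have re-proved the special case of \cite{pcl06} in which all projected languages coincide.

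One local slip needs repair, though it is not a fatal gap. Your justification of $L=\bigparallel_i M_i$ via the lemma following Definition~\ref{cd} writes $L=(L_m(G_1)\parallel L_m(G_k))\parallel L_m(G_2)\parallel\cdots\parallel L_m(G_n)$ and calls this a composition of languages over $E_{1+k},E_{2+k},\ldots,E_{n+k}$. But $L_m(G_i)$ for $i\geq 2$ is a language over $E_i$, and the parallel composition is alphabet-sensitive: re-attributing $L_m(G_i)$ to the alphabet $E_{i+k}$ replaces $P_i^{-1}(L_m(G_i))$ by $P_{i+k}^{-1}(L_m(G_i))$, which excludes every global string containing an event of $E_k\setminus E_i$, so the resulting composition is in general strictly smaller than $L$. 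The fix is immediate: by idempotence of $\parallel$, write $L=\bigparallel_{i=1}^{n}\bigl(L_m(G_i)\parallel L_m(G_k)\bigr)$, where each factor is genuinely a language over $E_{i+k}$, and then apply the decomposition lemma; alternatively, derive $P_{i+k}(L)=L_m\bigl(G_i\parallel G_k\parallel\bigparallel_{j\neq i}P_k(G_j)\bigr)$ directly from Lemma~\ref{lemma:Wonham} and conclude $\bigparallel_i P_{i+k}(L)=L$ via Lemma~\ref{simple}, which is exactly how the paper obtains this identity in~(\ref{eq0b}). With that correction your argument goes through in full.
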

  \begin{proof}
    By Lemma~\ref{proj}, showing the first equality, it holds in general that
    \begin{equation}\label{eq0b}
      \begin{split}
        \bigparallel_{i=1}^{n} P_{i+k}(\overline{L_m(G)})
        & = \bigparallel_{i=1}^{n} \overline{P_{i+k}(L_m(G))}
          \supseteq \overline{\bigparallel_{i=1}^{n} P_{i+k}(L_m(G))}\\
        & = \overline{\bigparallel_{i=1}^{n} L_m\left(G_i \parallel G_k \parallel \bigparallel_{j\neq i} P_k(G_j)\right)}
          = \overline{L_m(G)}\,.
      \end{split}
    \end{equation}
    The last equality follows from the commutativity of the synchronous product and Lemma~\ref{simple}. By \cite{pcl06}, it holds that $\bigparallel_{i=1}^{n} \overline{P_{i+k}(L_m(G))} = \overline{\bigparallel_{i=1}^{n} P_{i+k}(L_m(G))}$ if and only if $\bigparallel_{i=1}^{n} \overline{P_{k}(L_m(G))} = \overline{\bigparallel_{i=1}^{n} P_{k}(L_m(G))}$, and the later equality is obviously satisfied. Thus, the former equality implies by (\ref{eq0b}) that the language $\overline{L_m(G)}$ is conditionally decomposable with respect to alphabets $E_1$, $E_2$, $E_k$, which was to be shown.
  \qed\end{proof}
  
  As mentioned in the previous proof, when we consider all the assumptions, Feng \cite{FLT} (see also the references therein) has shown that if the projection $P_k$ is an observer for $L_1$ and $L_2$, then $L_1\parallel L_2$ is nonconflicting if and only if $P_k(L_1) \parallel P_k(L_2)$ is nonconflicting. This is generalized to arbitrary components in~\cite{pcl06}. Note that using this property on item 1 of Corollary~\ref{cor2}, together with the previous lemma and the fact that the observers preserve parallel composition, \cite{pcl06}, results in the following corollary, which generalizes the results shown in~\cite{FLT} for two components.
  \begin{corollary}
    Let $G_i$, $1,2,\ldots,n$, $n\ge 2$, and $G_k$ be nonblocking generators over the alphabets $E_i$, $i=1,2,\ldots,n$, and $E_k$, respectively, such that $E_s \subseteq E_k\subseteq \bigcup_i E_i$, and assume that $L_m(G_k)=\bigcap_i P_k(L_m(G_i))$ and $L(G_k)=\bigcap_i P_k(L(G_i))$. Assume that the projections $P^i_k$ are $L_m(G_i)$-observers, for $i=1,2,\ldots,n$. Then, the coordinated system $\bigparallel_i G_i \parallel G_k$ is nonblocking if and only if $G_k$ is nonblocking.
  \end{corollary}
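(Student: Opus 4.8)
The plan is to obtain this corollary from Corollary~\ref{cor2} by showing that, under the observer hypotheses, its two conditions either hold automatically or collapse to the nonblockingness of $G_k$ alone. Since $L_m(G_k)=\bigcap_i P_k(L_m(G_i))$, Corollary~\ref{cor2} applies directly and reduces nonblockingness of $\bigparallel_i G_i \parallel G_k$ to the two requirements: first, each $G_i\parallel G_k$ is nonblocking; second, $\overline{L_m(G_1\parallel\ldots\parallel G_n)}$ is conditionally decomposable with respect to $E_1,\ldots,E_n,E_k$. I would dispose of these two conditions separately and then recombine.

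For the conditional decomposability requirement I would invoke Lemma~\ref{lem1c}. That lemma asks for $P^{i+k}_k$ to be a $P_{i+k}(L_m(G))$-observer, whereas the corollary only assumes that $P^i_k$ is an $L_m(G_i)$-observer, so a bridging argument is needed. Here I would use the identity $P_{i+k}(L_m(G)) = L_m(G_i\parallel G_k\parallel \bigparallel_{j\neq i}P_k(G_j))$, already exhibited in the chain~(\ref{eqq}), together with the cited fact from \cite{pcl06} that the observer property is preserved under parallel composition. This upgrades the local observer assumptions to the global observer hypothesis required by Lemma~\ref{lem1c}, which then yields conditional decomposability of $\overline{L_m(G)}$. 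Since $L_m(G_k)\subseteq P_k(L_m(G_i))$ forces $L_m(G)=L_m(G_1\parallel\ldots\parallel G_n)$ (by Lemma~\ref{simple}, as in the proof of Corollary~\ref{cor2}), the two prefix closures coincide and condition~2 of Corollary~\ref{cor2} holds unconditionally.

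For the nonblockingness of each $G_i\parallel G_k$ I would use the nonconflict characterization of \cite{FLT}. From $L_m(G_k)=\bigcap_j P_k(L_m(G_j))\subseteq P_k(L_m(G_i))$ and the fact that $P_k(L_m(G_i))$ and $L_m(G_k)$ are both over $E_k$, their parallel composition equals their intersection $L_m(G_k)$; taking prefix closures (monotone by Lemma~\ref{proj}) shows that $P_k(L_m(G_i))$ and $L_m(G_k)$ are nonconflicting. As $P^i_k$ is an $L_m(G_i)$-observer and the projection restricts to the identity on $L_m(G_k)$, the result of \cite{FLT} lifts nonconflictness through the projection, giving that $L_m(G_i)$ and $L_m(G_k)$ are nonconflicting. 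Since the $G_i$ are nonblocking, $G_i\parallel G_k$ is then nonblocking exactly when $G_k$ is nonblocking, so condition~1 of Corollary~\ref{cor2} is equivalent to the nonblockingness of $G_k$. Combining the two reductions, Corollary~\ref{cor2} yields that $\bigparallel_i G_i\parallel G_k$ is nonblocking if and only if $G_k$ is nonblocking.

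The step I expect to be the main obstacle is the bridging argument for condition~2: matching the \emph{local} observer hypotheses (each $P^i_k$ an $L_m(G_i)$-observer) to the \emph{global} observer hypothesis on $P_{i+k}(L_m(G))$ demanded by Lemma~\ref{lem1c}. This is not a routine rewriting, as it genuinely rests on the preservation of the observer property under parallel composition and on correctly identifying the composed language $L_m(G_i\parallel G_k\parallel \bigparallel_{j\neq i}P_k(G_j))$; the auxiliary assumption $L(G_k)=\bigcap_i P_k(L(G_i))$ is used here to keep the generated (prefix-closed) behaviors consistent across these manipulations.
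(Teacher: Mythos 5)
Your proposal is correct and follows essentially the same route as the paper, which states the corollary as an immediate consequence of applying Feng's nonconflict-lifting result to item~1 of Corollary~\ref{cor2}, and Lemma~\ref{lem1c} together with the preservation of the observer property under parallel composition (from \cite{pcl06}) to item~2. Your write-up merely fills in the details of that one-line recipe --- including the bridging from the local observers $P^i_k$ to the $P_{i+k}(L_m(G))$-observers required by Lemma~\ref{lem1c}, and the use of $L(G_k)=\bigcap_i P_k(L(G_i))$ to get the converse direction of the reduction of item~1 to nonblockingness of $G_k$ --- so it matches the paper's intended argument.
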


  This works because the projection is an observer. However, there are languages which are conditionally decomposable, but the projections from Lemma~\ref{lem1c} are not observers. For instance, consider a language $L=\{ba,cdb,dcb\}$. It can be verified that $L$ is conditionally decomposable with respect to the alphabets $E_1=\{a,b,c\}$, $E_2=\{a,b,d\}$, and $E_k=\{a,b\}$, and that the projections $P^{i+k}_k$ are not $P_{i+k}(L)$-observers, for $i=1,2$. Note that $P_{1+k}(L)=\{ba,cb\}$ and $P_{2+k}(L)=\{ba,db\}$. Then, for $t=b$ and $s=cb$ (for $i=1$, or $s=db$ for $i=2$), there is no extension of $cb$ such that $P^{1+k}_k(cb)=ba$. Hence, the projections are not observers. For that reason, we consider in this paper a more general assumption that the projections are such that the projected generators are smaller than the original generators. Note that the conditions under which this is true still need to be investigated. Finally, note that for the verification whether the subsystems $G_i\parallel G_k\parallel \bigparallel_{j\neq i} P_k(G_j)$ are nonblocking, the methods presented in~\cite{FM06,FM09} can be used, combined with further usage of Binary Decision Diagrams~\cite{Bryant:1992} or state-tree structures~\cite{MaWonham:2005} to perform the calculations.

\section{Conclusion}\label{conclusion}
  The main contributions of this paper are polynomial-time algorithms for the verification whether a language is conditionally decomposable and for an extension of the coordinator alphabet $E_k$. Our approach to extend the alphabet $E_k$ is based on the successive addition of events to the alphabet $E_k$. Another approach has recently been discussed in~\cite{tmlr}, where the problematic transitions are identified, and the events of these transitions are renamed. From the viewpoint of applications, however, our approach can directly be used in coordination control for which it has primarily been developed. On the other hand, the approach from~\cite{tmlr} has so far no direct applications in the coordination control framework, which is under investigation. Nevertheless, the algorithms presented here can also be used for the approach presented in~\cite{tmlr}.
  
  Particularly valuable is the property that algorithms for checking conditional decomposability of a language with respect to alphabets is linear in the number of alphabets (that corresponds to local controllers in coordination control). No such results are known for co-observability (the notion playing a central role in decentralized control) and the related property of decomposability. It is well-known that co-observability is equivalent to decomposability under some reasonable assumptions on locally controllable and locally observable alphabets. Since conditional decomposability can be seen as decomposability with respect to particular alphabets (enriched by the coordinator events), it appears that our results about conditional decomposability will have impact on decentralized control with communicating supervisors. Indeed, co-observability is ensured by a special types of communication (which corresponds to enriching the sets of locally observable events such that a specification language becomes co-observable) in a similar way as decomposability is imposed by enriching the alphabets of local supervisors.
  
  The paper also compares the property of conditional decomposability to nonblockingness of a coordinated system. The current low complexity tests of practical interest are based on the observer property because it is the only known condition ensuring that the projected generator is smaller than the original one. However, this is only a sufficient condition and further investigation is needed. It is our plan to further investigate the construction procedures for designing coordinators for nonblockingness that are as small as possible and we will combine these results with those obtained in coordination control for safety so that both nonblockingness and safety issues can be efficiently handled using coordination control.

\section*{Acknowledgments}
  The authors gratefully acknowledge very useful suggestions and comments of the anonymous referees.
  The research has been supported by the GA{\v C}R grants P103/11/0517 and P202/11/P028, and by RVO: 67985840.
  
\bibliographystyle{model1-num-names}
\bibliography{biblio}

\end{document}